\newcommand{\blind}{0}
\newcommand{\bmu}{\boldsymbol\mu}
\newcommand{\bP}{\mathbf{P}}
\newcommand{\bE}{\mathbf{E}}
\newcommand{\bV}{\mathbf{Var}}
\newcommand{\bCov}{\mathbf{Cov}}
\newcommand{\bPB}{\mathbf{P_B}}
\newcommand{\bEB}{\mathbf{E_B}}
\newcommand{\bVB}{\mathbf{Var_B}}
\newcommand{\bCovB}{\mathbf{Cov_B}}
\newtheorem{theorem}{Theorem}[section]
\newtheorem{corollary}[theorem]{Corollary}
\newtheorem{remark}[theorem]{Remark}
\newtheorem{assumption}{Assumption}[section]
\begin{document} 

\def\spacingset#1{\renewcommand{\baselinestretch}%
{#1}\small\normalsize} \spacingset{1}


\if0\blind
{
  \title{\bf A weighted edge-count two-sample test for multivariate and object data}
  \author{Hao Chen, Xu Chen, and Yi Su\thanks{
    Hao Chen is an Assistant Professor at the Department of Statistics at University of California, Davis. Xu Chen is a Master's student at the Department of Statistics at Duke University.  Yi Su is a PhD student at the Department of Statistics at University of California, Davis.}
\hspace{.2cm}\\
    }
\date{}
  \maketitle
} \fi

\if1\blind
{
  \bigskip
  \bigskip
  \bigskip
  \begin{center}
    {\LARGE\bf A weighted edge-count two-sample test for multivariate and object data}
\end{center}
  \medskip
} \fi

\bigskip
\begin{abstract}

Two-sample tests for multivariate data and non-Euclidean data are widely used in many fields.  Parametric tests are mostly restrained to certain types of data that meets the assumptions of the parametric models.  In this paper, we study a nonparametric testing procedure that utilizes graphs representing the similarity among observations.  It can be applied to any data types as long as an informative similarity measure on the sample space can be defined.  The classic test based on a similarity graph has a problem when the two sample sizes are different.  We solve the problem by applying appropriate weights to different components of the classic test statistic.  The new test exhibits substantial power gains in simulation studies.  Its asymptotic permutation null distribution is derived and shown to work well under finite samples, facilitating its application to large datasets.  The new test is illustrated through an analysis on a real dataset of network data.

\end{abstract}

\noindent%
{\it Keywords:}  
nonparametric test, unequal sample sizes, permutation null distribution, similarity graph. 
\vfill
 
\newpage 
\spacingset{1.45} 
\section{Introduction}
\label{sec:intro}

Two-sample testing is a fundamental problem in statistics.  Due to the increasing richness of data in both dimension and complexity, this problem is encountering new challenges. Nowadays, many applications involve the test on data in high dimensions \citep{de2011clinical,feigenson2014disorganization} or even on non-Euclidean data, such as image data and network data \citep{eagle2009inferring,kossinets2006empirical}.  Parametric approaches can be applied to multivariate data under certain assumptions while their power decreases quickly as the dimension grows unless strong assumptions are made to facilitate the estimation of the large number of (nuisance) parameters, such as the covariance matrix.  In this work, we study a nonparametric testing procedure that works for both multivariate data and object data.


Nonparametric testing for two sample differences has a long history and rich literature; see \cite{gibbons2011nonparametric} for a survey.  \cite{friedman1979multivariate} proposed the first practical test that can be applied to data with arbitrary dimension.  They used pairwise distances among the pooled observations from both samples to construct a minimum spanning tree (MST), which is a spanning tree that connects all observations with the sum of distances of edges in the tree minimized.  The test statistic is the number of edges that connect nodes (observations) from different samples. 
We call this test the \emph{edge-count test} for easy reference.

The edge-count test is not restricted to the MST.  It can be applied to any similarity graph where more similar observations are more likely to be connected. \cite{friedman1979multivariate} also considered $k$-MSTs.  A $k$-MST is the union of the 1st, \dots, and $k$th MSTs, where the $i$th MST is a spanning tree connecting all observaitons that minimizes the sum of distances across edges subject to the constraint that this spanning tree does not contain any edge in the 1st, \dots, ($i$-1)th MST(s).  They showed that the edge-count test on a 3-MST is usually more powerful than that on a 1-MST.  \cite{schilling1986multivariate} and \cite{henze1988multivariate} used $k$-nearest neighbor graphs where each observation is connected to its $k$ closest neighbors.  More recently, \cite{rosenbaum2005exact} proposed to use the minimum distance non-bipartite pairing (MDP).  This divides the $N$ observations into $N/2$ (assuming $N$ is even) non-overlapping pairs in such a way as to minimize the sum of $N/2$ distances within pairs.  For an odd $N$, Rosenbaum suggested creating a pseudo data point that has distance 0 to all observations, and later discarding the pair containing this pseudo point.  This way of constructing the graph can be extended to $k$-MDPs as well, where a $k$-MDP is defined similarly to a $k$-MST.  Besides these common ways to construct the similarity graph, the graph can also be provided by domain experts directly \citep{chen2013graph}. 

The rationale of the edge-count test is that, if the two samples are from different distributions, observations would be preferentially closer to those from the same sample than those from the other sample.  Thus edges in the similarity graph would be more likely to connect observations from the same sample.  The test rejects the null hypothesis of equal distribution if the number of between-sample edges is significantly \emph{less} than what is expected under null.  \cite{maa1996reducing} showed that the edge-count test based on MST constructed on Euclidean distance is consistent against all alternatives for multivariate data.

However, in practice, sample sizes range from tens to thousands, or somewhat larger.  We call these sample sizes \emph{practical sample sizes}.  \cite{chen2016new} found that, when the dimension of the data is moderate to high, for practical sample sizes, the edge-count test is effective for locational alternatives but can have low power for scale alternatives when the Euclidean distance is used to construct the similarity graph.  The authors proposed a new test statistic on the similarity graph that works for both locational and scale alternatives under practical sample sizes.  We call this test the \emph{generalized edge-count test} for easy reference.  

\cite{chen2016new} recommended to use the generalized edge-count test when there is no clue on the type of alternatives.  However, when the alternative is locational, the edge-count test is recommended as it in general has higher power than the generalized edge-count test under such circumstances. 

In this work, we addressed another problem for the edge-count test when the sample sizes of the two samples are different.  Taking the edge-count test on the 5-MST constructed on Euclidean distance for testing the mean difference of two 20-dimensional Gaussian distributions as an example, we found that, starting from the equal sample size scenario, the power of the edge-count test \emph{decreases} when one sample size is doubled and the other sample size keeps the same (see in Section \ref{sec:problem} for more details).  This is counter-intuitive as increasing the sample size adds more information and we would expect the power of the test to increase.  This weird phenomenon indicates that the edge-count test statistic is not well defined. 

To address this problem, we propose a modified version of the edge-count test.  The idea is that, when the sample sizes are different, it is more difficult to form an edge within the sample with a smaller sample size than that for the sample with a larger sample size.  So we give within-sample edges different weights according to which sample they are from instead of treating them equally in the edge-count test.  The new test works properly under unequal sample sizes and we call it the \emph{weighted edge-count test}.

When the sample sizes are different, under locational alternatives, the weighted edge-count test is more powerful than the edge-count test and the generalized edge-count test. 
Hence, the weighted edge-count test and the generalized edge-count test can be used complementarily with one mainly for locational alternatives and the other for more general alternatives.  


The rest of the paper is organized as follows.  The problem of the edge-count test under unequal sample sizes is explored in Section \ref{sec:problem}.  The weighted edge-count test is proposed and studied in Section \ref{sec:weighted}.  Its power is examined in Section \ref{sec:power} and its asymptotics are studied in Section \ref{sec:asym}.  We illustrate the weighted edge-count test through an analysis on a real dataset of network data in Section \ref{sec:application}.  In Section \ref{sec:relation}, we discuss in more details the relation between the weighted edge-count test and the generalized edge-count test.

\section{The problem}
\label{sec:problem}

In this section, we explore what happens to the edge-count test under unequal sample sizes.  As an illustration example, we randomly draw $m$ observations from distribution $F_1=\mathcal{N}(\mathbf{0}, I_d)$ and $n$ observations from distribution $F_2=\mathcal{N}(\bmu,I_d)$, $\|\bmu\|_2=1.3$, $d=50$, and call them Sample 1 and Sample 2, respectively.  Here, $\mathcal{N}(\bmu, \Sigma)$ denotes a multivariate normal distribution with mean $\bmu$ and covariance matrix $\Sigma$, and $I_d$ is a $d\times d$ identity matrix.  We use the common notation $\|\cdot\|_2$ to denote $L_2$ norm.  The two distributions are chosen such that the test has moderate power.  We consider the following two scenarios:
\begin{itemize}
\item Scenario 1: $m=n=50$.
\item Scenario 2: $m=50,\ n=100$.
\end{itemize}
So for scenario 2, one sample size keeps the same and the other sample size is doubled.  We applied the edge-count test on $k$-MST constructed on the Euclidean distance to the simulated data.  Intuitively, the test should have a higher power in scenario 2 than in scenario 1 because there are more observations in scenario 2.  We estimated the power of the test by the fraction of trials that the test rejected the null hypothesis at 0.05 significance level in 1,000 trials.  The results are shown in Figure \ref{fig:poweredge}.   

\begin{figure}[!htp]
\begin{center}
\includegraphics[width=.55\textwidth]{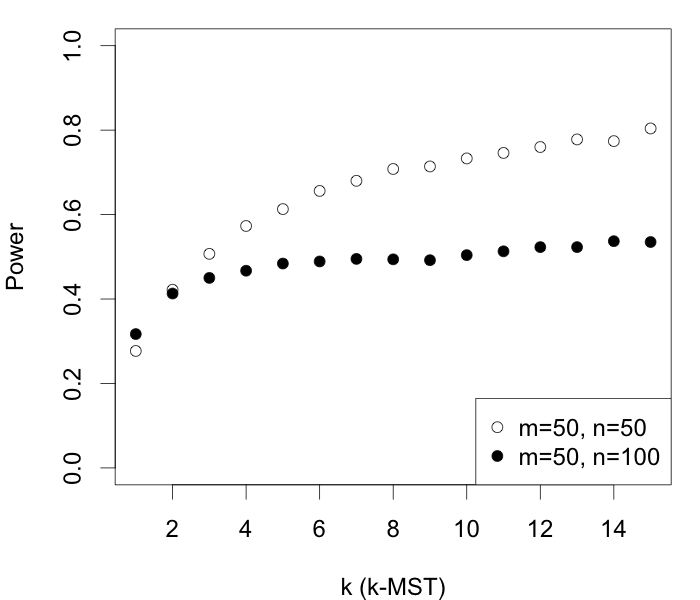}
\end{center}
\caption{The fraction of the trials (out of 1,000) that the edge-count test on $k$-MST rejected the null hypothesis of equal distribution.}\label{fig:poweredge}
\end{figure}

The choice of an optimal $k$ is an open question.  We here show the results from 1-MST to 15-MST.  To our surprise, we see from Figure \ref{fig:poweredge} that the edge-count test has lower power in scenario 2 under all $k$-MSTs except for 1-MST.  Why does this happen?

Before exploring in details, we first introduce some notations.  Let $N=m+n$ be the total sample size.  We pool observations from both samples and index them by $1,\dots, N$.  Let $G$ be an undirected similarity graph on the pooled observations (nodes) with no multi-edge, i.e., there is at most one edge between any two nodes.  The graph can be a $k$-MST, a $k$-MDP, etc.  We use $G$ to refer to both the graph and its set of edges when the vertex set is implicitly obvious.  The symbol $|\cdot|$ is used to denote the size of the set, so $|G|$ is the number of edges in $G$.  Let $R$ be the number of edges in $G$ that connect observations between the two samples, i.e., the number of between-sample edges.  we work under the permutation null distribution, which places $1/\binom{N}{m}$ probability on each of the $\binom{N}{m}$ permutations of the sample labels.  When there is no further specification, we denote by $\bP$, $\bE$, $\bV$, $\bCov$ probability, expectation, variance, and covariance, respectively, under the permutation null distribution. 

We next explore in details on what happens in the edge-count test.  We focus on the test based on 5-MST (similar patterns are observed for other $k$-MSTs where $k>1$).  First of all, we check whether adding more observations does make $R$ further smaller than its null expectation $\bE(R)$. 
In each run, we calculate $\bE(R)-R$ for scenario 1 and denote it by $D_1$.  We then add 50 more observations randomly drawn from $F_2$, re-construct the 5-MST based on the 150 observations, calculate $\bE(R)-R$ based on the new 5-MST, and denote it by $D_2$.  We check whether $D_2$ is larger than $D_1$ in general.  Figure \ref{fig:Ddiff} shows the boxplots of $D_1$ and $D_2$ separately, as well as their differences $(D_2-D_1)$, from 1,000 simulation runs.  

\begin{figure}[!htp]
\begin{center}
\includegraphics[width=.48\textwidth]{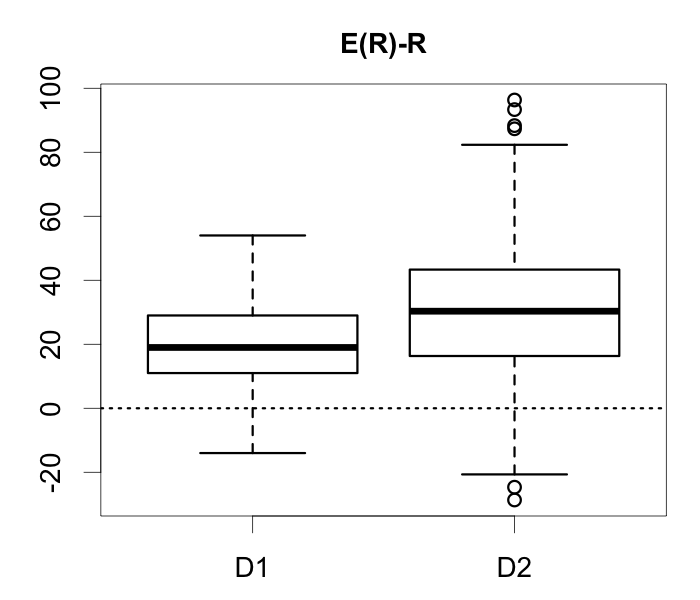}
\includegraphics[width=.48\textwidth]{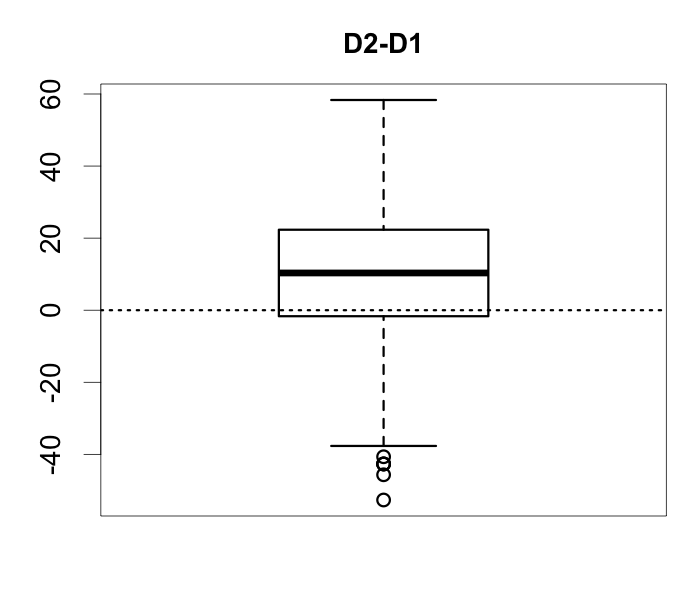}
\end{center}
\caption{Boxplots of $\bE(R)-R$ before and after adding 50 more observations, as well as their difference (after $-$ before), from 1,000 simulation runs.  The horizontal dashed line is at level 0.}\label{fig:Ddiff}
\end{figure}

We see that $\bE(R)-R$ is in general positive under both scenarios.  This is expected since the two distributions differ in the mean and the observations tend not to find observations from the other sample to be similar.  The boxplot of the difference between $D_2$ and $D_1$ shows that $\bE(R)-R$ indeed becomes larger in general when more observations are included.  This also complies with what we would expect.

To quantify how further $R$ is from its null expectation $\bE(R)$, we need to compare $\bE(R)-R$ to its standard deviation under null.
Figure \ref{fig:sdR} shows the boxplots of the standard deviations of $R$ before ($sd_1$) and after ($sd_2$) adding the 50 more observations, as well as their ratio ($sd_2/sd_1$), from 1,000 simulation runs. 

\begin{figure}[!htp]
\begin{center}
\includegraphics[width=.48\textwidth]{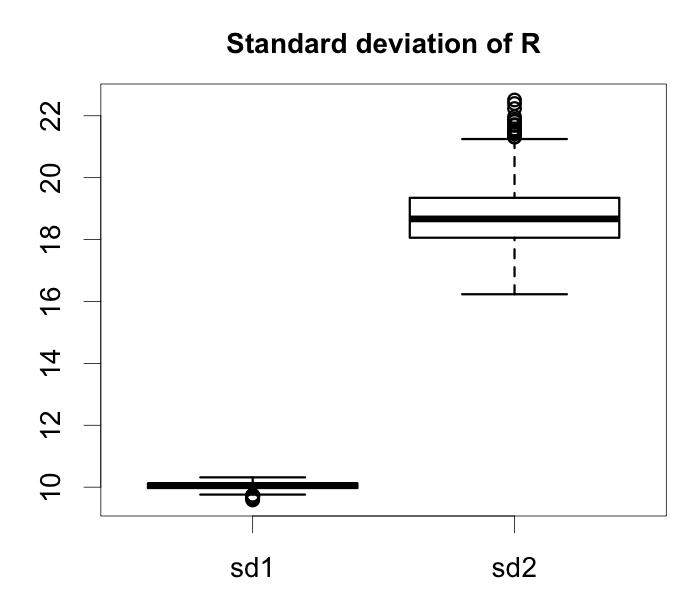}
\includegraphics[width=.48\textwidth]{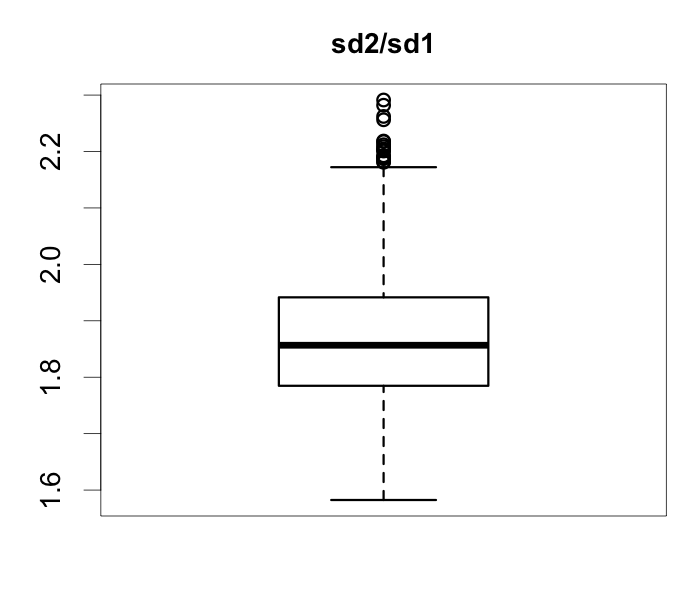}
\end{center}
\caption{Boxplots of the standard deviation of $R$ before and after adding 50 more observations to one sample, as well as their ratio (after/before), from 1,000 simulation runs.}\label{fig:sdR}
\end{figure}

We see that the standard deviation of $R$ after having the additional 50 observations is on average about 1.85 times as large as that for before, which is much larger than one would expect because the squared root of the ratio of the sample sizes is only $\sqrt{150/100}\approx 1.22$.  We see from Figure \ref{fig:Ddiff} that the difference between $R$ and $\bE(R)$ by having the additional observations is not increased as much as the standard deviation in general, resulting in the decrease in the $z$-score, $(\bE(R)-R)/sd(R)$ (see Figure \ref{fig:Zdiff}).

\begin{figure}[!htp]
\begin{center}
\includegraphics[width=.48\textwidth]{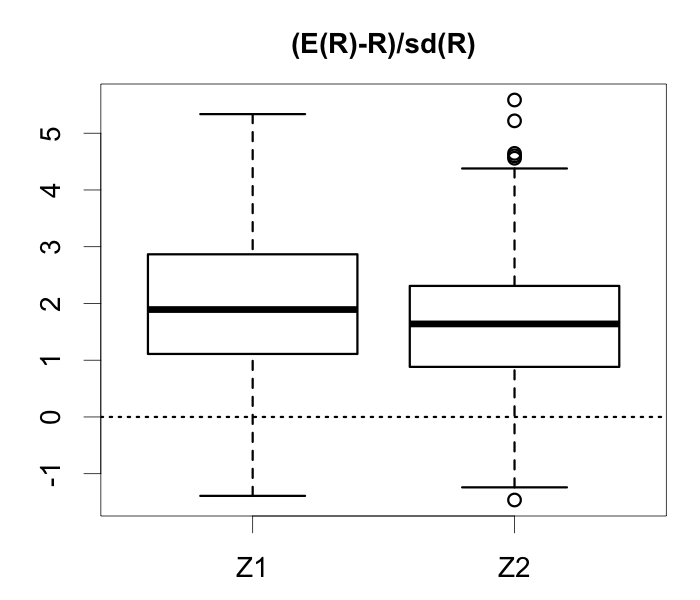}
\includegraphics[width=.48\textwidth]{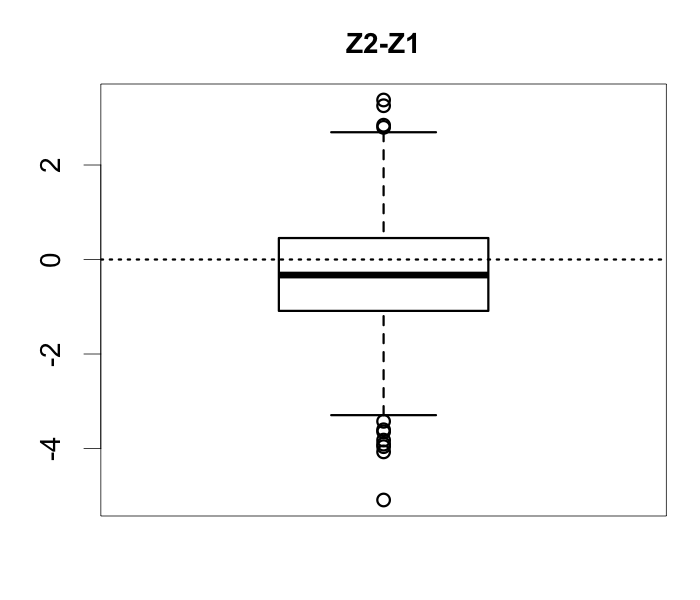}
\end{center}
\caption{Boxplots of the $z$-score, $(\bE(R)-R)/sd(R)$, before and after adding 50 more observations, as well as their difference (after $-$ before), from 1,000 simulation runs.  The horizontal dashed line is at level 0.}\label{fig:Zdiff}
\end{figure}

Therefore, the decrease of power in scenario 2 is mainly due to the variance boosting problem under unequal sample sizes.  To be more exact,
from \cite{friedman1979multivariate} and \cite{chen2016new}, we have the variance of $R$ under the permutation null distribution be
\begin{align}
\bV(R) & =  \frac{mn(m-1)(n-1)}{N(N-1)(N-2)(N-3)}\times \\ 
& \ \left(4|G| + \frac{(n-m)^2-(N-2)}{(m-1)(n-1)}\left(\sum_{i=1}^N|G_i|^2-\frac{4|G|^2}{N} \right) - \frac{8}{N(N-1)}|G|^2 \right), \nonumber
\end{align}
where $G_i$ the subgraph in $G$ that consists of all edge(s) that connect to node $i$.  So $|G_i|$ is the degree of node $i$ in $G$.

From Cauchy-Schwarz inequality, we know that
$$\sum_{i=1}^N|G_i|^2 \geq \frac{4|G|^2}{N},$$
and the equality holds only when $|G_i|$'s are equal for all $i$'s.  We call a graph to be \emph{flat} if $\sum_{i=1}^N|G_i|^2-\frac{4|G|^2}{N} = o(N)$, i.e., the degrees of the nodes are similar to each other.  When a graph is not flat, such as a $k$-MST, we see from the expression of $\bV(R)$ that
\begin{equation}\label{eq:VRafter}
\frac{(n-m)^2}{(m-1)(n-1)}\left(\sum_{i=1}^N|G_i|^2-\frac{4|G|^2}{N} \right) 
\end{equation}
contributes a major portion of the variance when $|m-n|=O(N)$.

Figure \ref{fig:quantity} plots $\sum_{i=1}^N|G_i|^2-\frac{4|G|^2}{N}$ and $|G|$ for $k$-MSTs constructed on Euclidean distance in a typical run under scenario 2.  We see that $\sum_{i=1}^N|G_i|^2-\frac{4|G|^2}{N}$ is much larger than $|G|$, especially for large $k$.  In scenario 2, $\frac{(n-m)^2}{(m-1)(n-1)} \approx 0.5$.  When $k>3$, the variance of $R$ under scenario 2 is considerably larger than the corresponding case under scenario 1.  

\begin{figure}[!htp]
\begin{center}
\includegraphics[width=.55\textwidth]{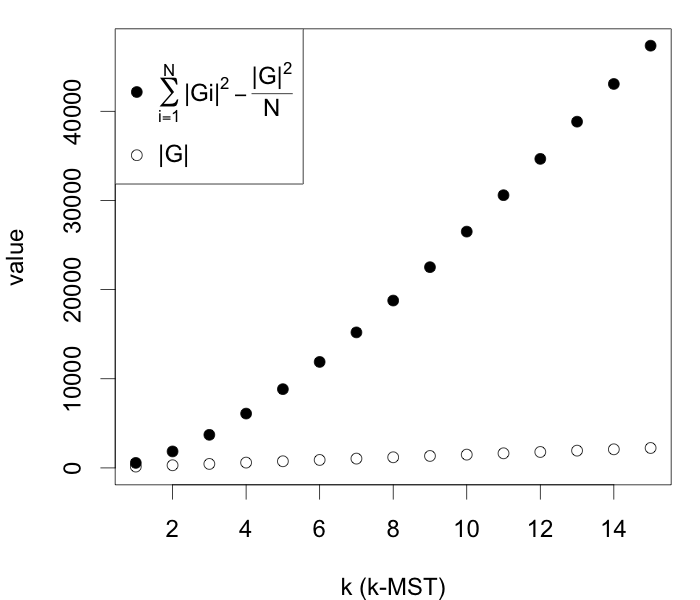}
\end{center}
\caption{The values of $\sum_{i=1}^N|G_i|^2-\frac{4|G|^2}{N}$ and $|G|$ for a $k$-MST constructed on Euclidean distance in a typical run under scenario 2.}\label{fig:quantity}
\end{figure}

To avoid the variance boosting problem, we may choose flat graphs, such as $k$-MDPs, when the sample sizes of the two samples are different.  However, this restricts our choices of the similarity graph.  Instead, we would rather have a test that does not have the variance boosting problem and works for general graphs.

\section{Weighted edge-count test}
\label{sec:weighted}

In this section, we seek to construct a test statistic that captures the signal in a similar way to the edge-count test while not having the variance boosting problem under unequal sample sizes for any graph.

The rationale of the edge-count test is that, under some alternatives, the observations from the same sample tend to connect within themselves, so the number of \emph{between-sample} edges tends to be \emph{less} than its null expectation, or the number of \emph{within-sample} edges tends to be \emph{more} than its null expectation.  Let $R_1$ be the number of edges that connect within the $m$ observations in sample 1 and $R_2$ be that for sample 2.  Then the test statistic for the edge-count test is equivalent to $R_1+R_2$.  This test statistic is not ideal under unequal sample sizes as it treats the two samples in the same way.  However, forming an edge within the sample with a smaller sample size is harder than that for the other sample.  We thus propose to weight the within-sample edge counts by the reciprocal of their corresponding sample sizes.  In particular, we consider the following two test statistics:
\begin{align}
R_w & = q R_1 + p R_2, \quad p=\frac{m}{N},\ q=1-p, \label{eq:Rw}\\
\tilde{R}_w & = \tilde{q} R_1 + \tilde{p} R_2, \quad \tilde{p}=\frac{m-1}{N-2},\ \tilde{q}=1-\tilde{p}. \label{eq:Rwtilde}
\end{align}
Both of their variances are well controlled no matter how different $m$ and $n$ are, and $\tilde{R}_w$ has the smallest variance among all tests of the form $aR_1+(1-a)R_2$.


\begin{theorem}\label{thm:VRw}
When $m, n\geq 2$, both $\bV(R_w)$ and $\bV(\tilde{R}_w)$ are bounded by 
\begin{equation}\label{eq:bound}
\frac{mn(m-1)(n-1)}{N(N-1)(N-2)(N-3)}|G|.
\end{equation}
\end{theorem}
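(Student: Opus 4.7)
The strategy is to reduce $\bV(aR_1+(1-a)R_2)$ to a quadratic in $a$ by passing to a more convenient pair of variables. Let $g_i$ be the indicator that observation $i$ lies in sample 1, and set $Z := \sum_{i=1}^N g_i|G_i|$. Summing $g_i+g_j$ over edges $(i,j)\in G$ in two ways gives the identity $Z = 2R_1 + R$, so $R_1 = (Z-R)/2$, $R_2 = |G|-(Z+R)/2$, and
\[
aR_1 + (1-a)R_2 \;=\; \tfrac{2a-1}{2}\,Z \;-\; \tfrac{1}{2}\,R \;+\; (1-a)|G|.
\]
Thus $\bV(aR_1+(1-a)R_2)$ is a quadratic in $2a-1$ built from $\bV(Z)$, $\bV(R)$, and $\bCov(Z,R)$. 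From $\bV(g_i)=mn/N^2$ and $\bCov(g_i,g_j)=-mn/[N^2(N-1)]$ I get $\bV(Z)=\frac{mn}{N(N-1)}\bigl(\sum_i|G_i|^2-4|G|^2/N\bigr)$, and a parallel expansion of $\bCov(R_1,Z)$ (split according to whether the outer node coincides with an endpoint of the edge) gives $\bCov(R_1,Z)=\tilde p\,\bV(Z)$, hence $\bCov(Z,R)=(1-2\tilde p)\bV(Z)=\frac{n-m}{N-2}\bV(Z)$. The quadratic in $a$ is therefore minimized at $a=\tilde q$, which simultaneously gives the minimum-variance claim for $\tilde R_w$ and the identity $\bV(\tilde R_w)=\tfrac{1}{4}\bV(R)-\tfrac{(n-m)^2}{4(N-2)^2}\bV(Z)$.

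Substituting the paper's expression for $\bV(R)$, the coefficient of $\sum_i|G_i|^2-4|G|^2/N$ collapses via the factorization $(n-m)^2-(N-2)^2=-4(m-1)(n-1)$ to a clean $-4/(N-2)$, and after collecting terms
\[
\bV(\tilde R_w) \;=\; M_0\,|G| \;-\; \frac{M_0}{N-2}\Bigl(\sum_i|G_i|^2 - \tfrac{2|G|^2}{N-1}\Bigr),
\]
where $M_0:=\frac{mn(m-1)(n-1)}{N(N-1)(N-2)(N-3)}$ denotes the prefactor in (\ref{eq:bound}). Cauchy-Schwarz yields $\sum_i|G_i|^2\geq 4|G|^2/N\geq 2|G|^2/(N-1)$ (the second inequality needing only $N\geq 2$), so the subtracted quantity is nonnegative and $\bV(\tilde R_w)\leq M_0|G|$.

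For $R_w$ the quadratic structure immediately yields $\bV(R_w)=\bV(\tilde R_w)+(q-\tilde q)^2\bV(Z)$, with $(q-\tilde q)^2=(n-m)^2/[N^2(N-2)^2]$. Substituting and combining with the previous display reduces the desired inequality to
\[
S(1-\lambda) + \tfrac{4\lambda T^2}{N} \;\geq\; \tfrac{2T^2}{N-1},
\]
where $S=\sum_i|G_i|^2$, $T=|G|$, and $\lambda:=(N-3)(n-m)^2/[N^2(m-1)(n-1)]$. The main obstacle is verifying $\lambda\leq 1$: algebraic rearrangement shows this is equivalent to $mn(N+6)\geq 2N^2$, and the extremal case $m=2,\,n=N-2$ of the hypothesis $m,n\geq 2$ gives $mn=2(N-2)$, for which the inequality reduces to $N\geq 3$, automatic from $m,n\geq 2$. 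Granted $\lambda\leq 1$, applying $S\geq 4T^2/N$ to the $(1-\lambda)$ term makes the left-hand side at least $2(N-2)T^2/[N(N-1)]\geq 0$, completing the proof.
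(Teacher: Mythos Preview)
Your proof is correct but takes a different route from the paper's. The paper simply quotes the closed forms of $\bV(R_1)$, $\bV(R_2)$, $\bCov(R_1,R_2)$ from \cite{chen2016new}, expands $\bV(R_w)$ and $\bV(\tilde R_w)$ directly, and observes that in each case the term subtracted from $M_0|G|$ is manifestly nonnegative: for $\tilde R_w$ the coefficient of $\sum_i|G_i|^2-4|G|^2/N$ is just $1/(N-2)$, and for $R_w$ it is $(mnN-2m^2-2n^2+2mn)/[N^2(m-1)(n-1)]$, whose numerator factors as $m^2(n-2)+n^2(m-2)+2mn\ge 0$ when $m,n\ge 2$. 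So the paper's argument is essentially a two-line algebraic check once the cited formulas are in hand.

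Your approach is longer but more self-contained and more structural: by passing to $(Z,R)$ you derive the needed second moments from scratch (no external citation), and the quadratic in $a$ immediately yields both the minimum-variance characterization of $\tilde R_w$ (which the paper proves separately as Theorem~\ref{thm:VRwmin}) and the clean identity $\bV(\tilde R_w)=\tfrac14\bV(R)-\tfrac{(n-m)^2}{4(N-2)^2}\bV(Z)$. The cost is that your bound for $R_w$ goes through the auxiliary inequality $\lambda\le 1$, i.e.\ $mn(N+6)\ge 2N^2$, and an extremal-case check, whereas the paper's direct factorization dispenses with this detour entirely. Both arguments are valid; yours trades brevity for a unified picture that explains \emph{why} $\tilde R_w$ is the natural weighting.
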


\begin{proof}
Following the expressions of $\bV(R_1)$, $\bV(R_2)$, $\bCov(R_1,R_2)$ in the proof of Theorem 3.2 in \cite{chen2016new}, we have
\begin{align}
\bV(R_w) & = q^2\,\bV(R_1)+p^2\,\bV(R_2)+2\,p\,q\,\bCov(R_1,R_2) \nonumber \\
& = \frac{mn(m-1)(n-1)}{N(N-1)(N-2)(N-3)}\times \\ 
& \quad \left(|G|-\frac{mnN-2m^2-2n^2+2mn}{N^2(m-1)(n-1)} \left(\sum_{i=1}^N|G_i|^2-\frac{4|G|^2}{N} \right) - \frac{2}{N(N-1)}|G|^2 \right). \nonumber
\end{align}
Since $\sum_{i=1}^N|G_i|^2 \geq \frac{4|G|^2}{N}$ from the Cauchy-Schwarz inequality, and $mnN-2m^2-2n^2+2mn = m^2(n-2)+n^2(m-2)+2mn\geq 0$ when $m,n\geq 2$, we have $\bV(R_w)$ bounded by \eqref{eq:bound}. 

For $\bV(\tilde{R}_w)$, we have
\begin{align}
\bV(\tilde{R}_w) & = \tilde{q}^2\,\bV(R_1)+\tilde{p}^2\,\bV(R_2)+2\,\tilde{p}\,\tilde{q}\,\bCov(R_1,R_2) \nonumber \\
& = \frac{mn(m-1)(n-1)}{N(N-1)(N-2)(N-3)}\times \\ 
& \quad \left(|G|-\frac{1}{N-2} \left(\sum_{i=1}^N|G_i|^2-\frac{4|G|^2}{N} \right) - \frac{2}{N(N-1)}|G|^2 \right), \nonumber
\end{align}
and the result follows straightforwardly.
\end{proof}

\begin{theorem}\label{thm:VRwmin}
For all test statistic of the form $aR_1+bR_2, \ a+b=1$, we have
$$\bV(aR_1+bR_2)\geq \bV(\tilde{R}_w).$$
\end{theorem}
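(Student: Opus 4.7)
The approach is to exploit the fact that $\bV(aR_1 + (1-a)R_2)$ is a quadratic function of the single variable $a$, so the minimizer is pinned down by one first-order condition. Writing $b=1-a$, we have
\begin{equation*}
\bV(aR_1+bR_2) = a^2\,\bV(R_1) + (1-a)^2\,\bV(R_2) + 2a(1-a)\,\bCov(R_1,R_2),
\end{equation*}
whose leading coefficient in $a$ is $\bV(R_1)+\bV(R_2)-2\bCov(R_1,R_2)=\bV(R_1-R_2)\geq 0$. Hence the function is convex in $a$ and the problem reduces to verifying that $a=\tilde{q}$ is a stationary point.

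For this I would use the clean reparameterization
\begin{equation*}
aR_1+(1-a)R_2 \;=\; \tilde{R}_w + (a-\tilde{q})(R_1-R_2),
\end{equation*}
which immediately yields
\begin{equation*}
\bV(aR_1+(1-a)R_2) \;=\; \bV(\tilde{R}_w) + 2(a-\tilde{q})\,\bCov(\tilde{R}_w, R_1-R_2) + (a-\tilde{q})^2\,\bV(R_1-R_2).
\end{equation*}
Because the last term is nonnegative, it suffices to establish the orthogonality relation $\bCov(\tilde{R}_w, R_1-R_2)=0$. Expanding this covariance and collecting terms shows that it is equivalent to the identity
\begin{equation*}
(n-1)\bigl[\bV(R_1)-\bCov(R_1,R_2)\bigr] \;=\; (m-1)\bigl[\bV(R_2)-\bCov(R_1,R_2)\bigr].
\end{equation*}

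To finish, I would substitute the explicit formulas for $\bV(R_1)$, $\bV(R_2)$, and $\bCov(R_1,R_2)$ from the proof of Theorem~3.2 in \cite{chen2016new} (the same formulas already invoked in the proof of Theorem~\ref{thm:VRw}). The underlying symmetry of the permutation distribution under swapping sample labels gives $\bV(R_2)(m,n)=\bV(R_1)(n,m)$ and $\bCov(R_1,R_2)(m,n)=\bCov(R_1,R_2)(n,m)$, so the identity above really asserts that $[\bV(R_1)-\bCov(R_1,R_2)]/(m-1)$ is symmetric in $(m,n)$. The main obstacle is this final algebraic verification: the three variance/covariance quantities are rational expressions in $m,n,N$ with $|G|$, $|G|^2$, and $\sum_i |G_i|^2$ coefficients, and the cancellation is somewhat delicate. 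A useful cross-check along the way is the direct computation $R_1-R_2=\sum_i |G_i|Z_i-|G|$ (with $Z_i$ the indicator that node $i$ lies in sample~1), combined with $\bV(Z_i)=mn/N^2$ and $\bCov(Z_i,Z_j)=-mn/[N^2(N-1)]$, which gives an independent closed form for $\bV(R_1-R_2)$ and pins down one linear combination of $\bV(R_1),\bV(R_2),\bCov(R_1,R_2)$.
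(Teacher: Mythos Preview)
Your proposal is correct, but the paper takes a more direct route. Instead of proving the orthogonality $\bCov(\tilde{R}_w,R_1-R_2)=0$, the paper plugs the Chen--Friedman formulas into $\bV(aR_1+bR_2)$ \emph{first} and observes that the result has the form
\[
\frac{mn(m-1)(n-1)}{N(N-1)(N-2)(N-3)}\left(|G|+g(a)\Bigl(\textstyle\sum_i|G_i|^2-\tfrac{4|G|^2}{N}\Bigr)-\frac{2|G|^2}{N(N-1)}\right),
\]
where only $g(a)=a^2\frac{m-2}{n-1}+(1-a)^2\frac{n-2}{m-1}-2a(1-a)$ depends on $a$. Since the factor multiplying $g(a)$ is nonnegative by Cauchy--Schwarz, one simply minimizes the scalar quadratic $g$, getting $a=(n-1)/(N-2)=\tilde{q}$ directly. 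This is shorter and also \emph{derives} the optimal weight rather than verifying it. Your orthogonality formulation is more conceptual---it frames $\tilde{R}_w$ as the projection of any $aR_1+(1-a)R_2$ onto the orthogonal complement of $R_1-R_2$---and your identity $R_1-R_2=\sum_i|G_i|Z_i-|G|$ is a nice structural observation. But the ``delicate cancellation'' you worry about is not really there: the $|G|$ and $|G|^2$ coefficients in $\bV(R_1),\bV(R_2),\bCov(R_1,R_2)$ are all equal (that is precisely why those terms in $\bV(aR_1+bR_2)$ are $a$-independent), so $\bV(R_j)-\bCov(R_1,R_2)$ involves only the $\sum_i|G_i|^2-4|G|^2/N$ term, and the symmetry you need is immediate.
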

\begin{proof}
Since
\begin{align*}
\bV&(aR_1+bR_2)  = a^2\,\bV(R_1)+b^2\,\bV(R_2)+2\,a\,b\,\bCov(R_1,R_2) \nonumber \\
& = \frac{mn(m-1)(n-1)}{N(N-1)(N-2)(N-3)}\times \\ 
& \quad \left(|G|+\left( a^2\frac{m-2}{n-1} + b^2 \frac{n-2}{m-1} -2ab \right) \left(\sum_{i=1}^N|G_i|^2-\frac{4|G|^2}{N} \right) - \frac{2}{N(N-1)}|G|^2 \right),
\end{align*}
and
\begin{align*}
g(a) & = a^2\frac{m-2}{n-1} + b^2 \frac{n-2}{m-1} -2ab \\
& = a^2 \frac{(N-2)(N-3)}{(n-1)(m-1)} - 2a\frac{N-3}{m-1} + \frac{n-2}{m-1}
\end{align*}
is minimized when $a=(n-1)/(N-2)$, the result follows.
\end{proof}

\begin{remark}
When $m=n$, we have $p=q=0.5$, $\tilde{p}=\tilde{q}=0.5$, then $R_w=\tilde{R}_w=(R_1+R_2)/2 = (|G|-R)/2$.  So the tests based on $R_w$, $\tilde{R}_w$ and $R$ are all equivalent under the balanced design.
\end{remark}

\begin{remark}
For $k$-MDP, when $N$ is even, every node has degree $k$, so
$$2R_1+R=km, \text{ and } 2R_2+R=kn.$$
Then 
\begin{align*}
R_w & = qR_1+pR_2 = \frac{n}{N}\left(\frac{km-R}{2}\right) + \frac{m}{N}\left(\frac{kn-R}{2}\right) = \frac{kmn}{N} -\frac{R}{2}, \\
\tilde{R}_w & = \tilde{q}R_1+\tilde{p}R_2 = \frac{n-1}{N-2}\left(\frac{km-R}{2}\right) + \frac{m-1}{N-2}\left(\frac{kn-R}{2}\right) = \frac{2kmn-kN}{2(N-2)} -\frac{R}{2}.
\end{align*}
So for $k$-MDP, when $N$ is even, the tests based on $R_w$, $\tilde{R}_w$ and $R$ are all equivalent. 

When $N$ is odd, the degrees of the nodes are not exactly the same.  However, there are at least $N-k$ nodes with degree $k$ and at most $k$ nodes with degree less than $k$, so the tests based on $R_w$, $\tilde{R}_w$, and $R$ are all very similar.  The same argument holds for any flat graphs.  This complies with the earlier observation that the variance boosting problem does not exist for flat graphs.  
\end{remark}

\begin{remark}
Asymptotically, when $m,n=O(N)$ and $N\rightarrow\infty$, the tests based on $R_w$ and $\tilde{R}_w$ are the same.  For finite samples, it turns out that, even though $\bV(\tilde{R}_w)$ is slightly smaller than $\bV(R_w)$, the power of the test based on $R_w$ is slightly higher than that on $\tilde{R}_w$ under locational alternatives.  See Section \ref{sec:powerRw} for more details of their comparisons.  
\end{remark}


\section{Power analysis}
\label{sec:power}

We first check if the weighted edge-count test solves the variance boosting problem.  We compare it to the edge-count test and the generalized edge-count test.  For moderate sample sizes, the power of the tests based on $R_w$ and $\tilde{R}_w$ are very similar, so we only include in the comparison the test on $R_w$ (Section \ref{sec:powercomp}).  We then explore the power differences between $R_w$ and $\tilde{R}_w$ for small $m$ and $n$ (Section \ref{sec:powerRw}). 

\subsection{A comparison to existing tests}
\label{sec:powercomp}

Consider the illustration example in Section \ref{sec:problem}, we added in the comparison the weighted edge-count test ($R_w$) and the generalized edge-count test ($S$) (Figure \ref{fig:power3}).

\begin{figure}[!htp]
\begin{center}
\includegraphics[width=.6\textwidth]{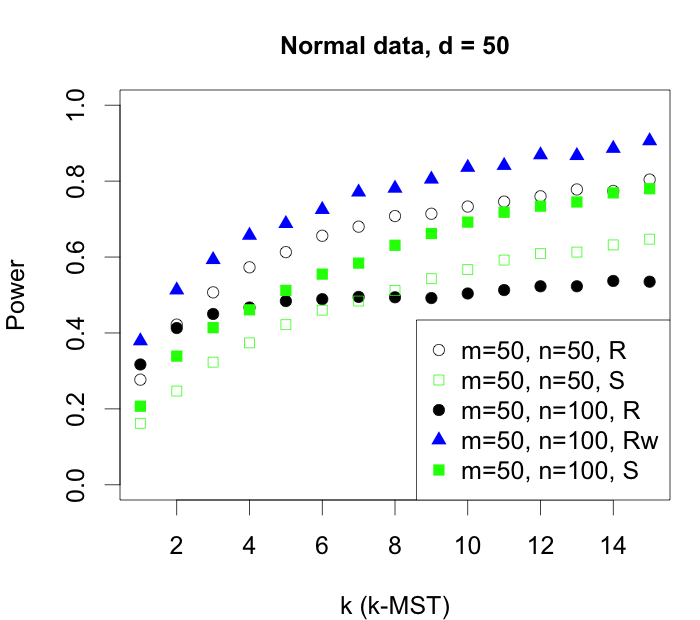}
\end{center}
\caption{The fraction of trials (out of 1,000) that the test rejected the null hypothesis at 0.05 significance level.  
}\label{fig:power3}
\end{figure}

We see that the weighted edge-count test (blue triangles) has higher power than the edge-count test when the sample sizes are different for all $k$-MSTs, $k=1,\dots,15$.  It also has higher power than the corresponding smaller sample size scenario ($m=n=50$).  Hence, by controlling the variance, the weighted edge-count test does solve the variance boosting problem in the edge-count test.

Comparing the weighted edge-count test to the generalized edge-count test proposed by \cite{chen2016new}, we see that the weighted edge-count test does have higher power than the generalized edge-count test under locational alternatives, and the weighted edge-count test under scenario 2 ($m=50,n=100$) is the only test that has higher power than the edge-count test under scenario 1 ($m=n=50$) for all $k$-MSTs. 

We checked the performances of the tests under different dimensions (Figure \ref{fig:powernormal}).  The mean differences are chosen so that the tests have moderate power.  We see the same pattern: Under locational alternatives, the weighted edge-count test is more powerful than the edge-count test and the generalized edge-count test when sample sizes are different, and the weighted edge-count test is the only test in scenario 2 that has higher power than the edge-count test in scenario 1 for all $k$-MSTs. 

\begin{figure}[!htp]
\begin{center}
\includegraphics[width=.48\textwidth]{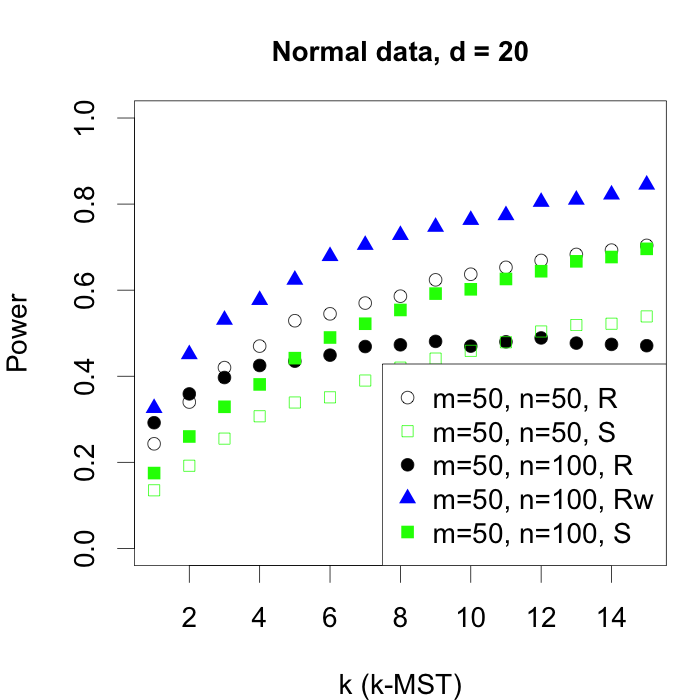}
\includegraphics[width=.48\textwidth]{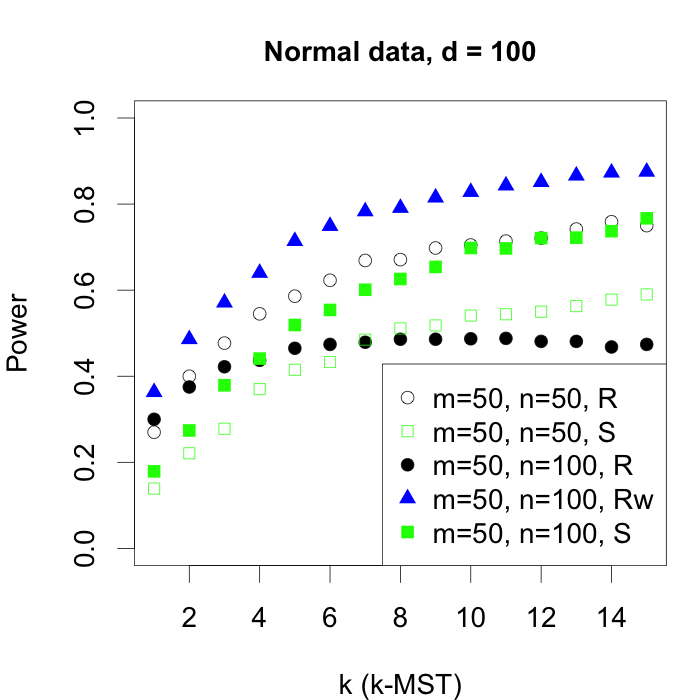}
\end{center}
\caption{The fraction of trials (out of 1,000) that the test rejected the null hypothesis at 0.05 significance level.  The mean differences are of Euclidean distance 1 for $d=20$ and 1.5 for $d=100$.}\label{fig:powernormal}
\end{figure}

We also compared all tests for $t$-distributed data to check how the tests behave when the tail of the distribution is heavier than the normal distribution.  The distributions are products of independent $t$ distributions.  The two distributions differ in the mean, and for each dimension, the mean difference is set to be the same as that under the normal case.    The results for $t_{10}$ are shown in Figure \ref{fig:powert10}, and those for $t_5$ are shown in Figure \ref{fig:powert5}.  We see that, overall, all tests have lower power for $t$-distributed data than for normal data, which indicates that the power of all these tests decreases when the tail of the distribution becomes heavier.  Among the three tests, the same pattern retains that the weighted edge-count test outperforms both other tests.

\begin{figure}[!htp]
\begin{center}
\includegraphics[width=.32\textwidth]{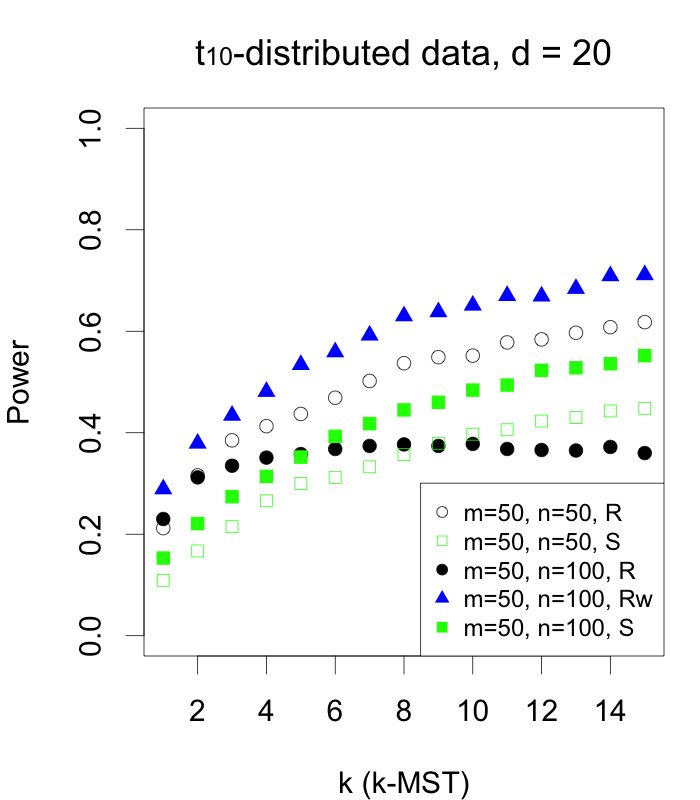}
\includegraphics[width=.32\textwidth]{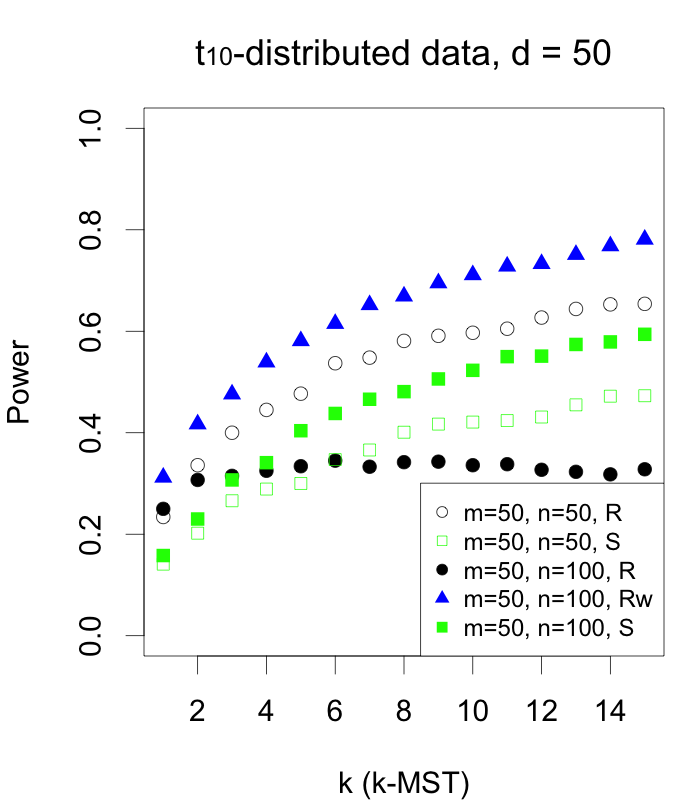}
\includegraphics[width=.32\textwidth]{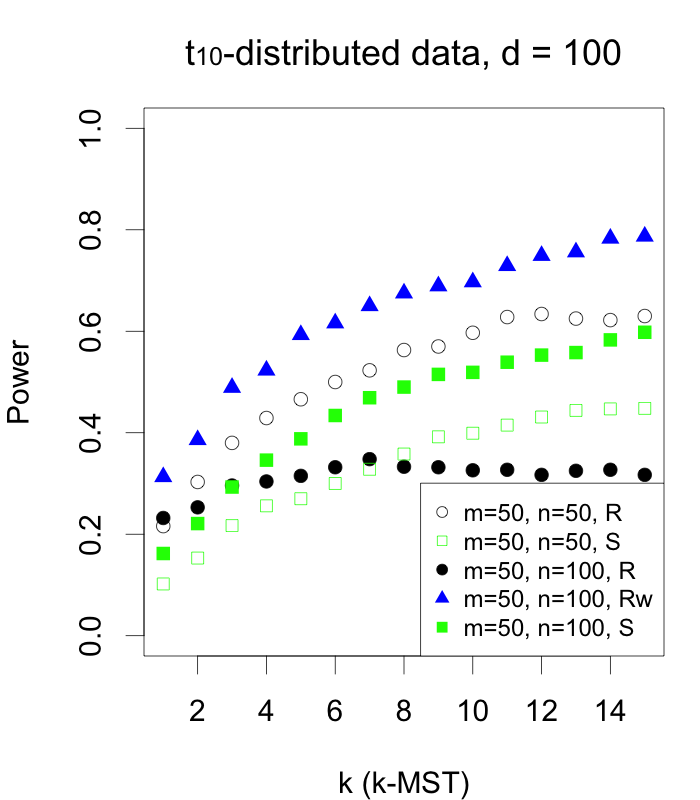}
\end{center}
\caption{The fraction of trials (out of 1,000) that the test rejected the null hypothesis at 0.05 significance level for $t_{10}$-distributed data.}\label{fig:powert10}
\end{figure}

\begin{figure}[!htp]
\begin{center}
\includegraphics[width=.32\textwidth]{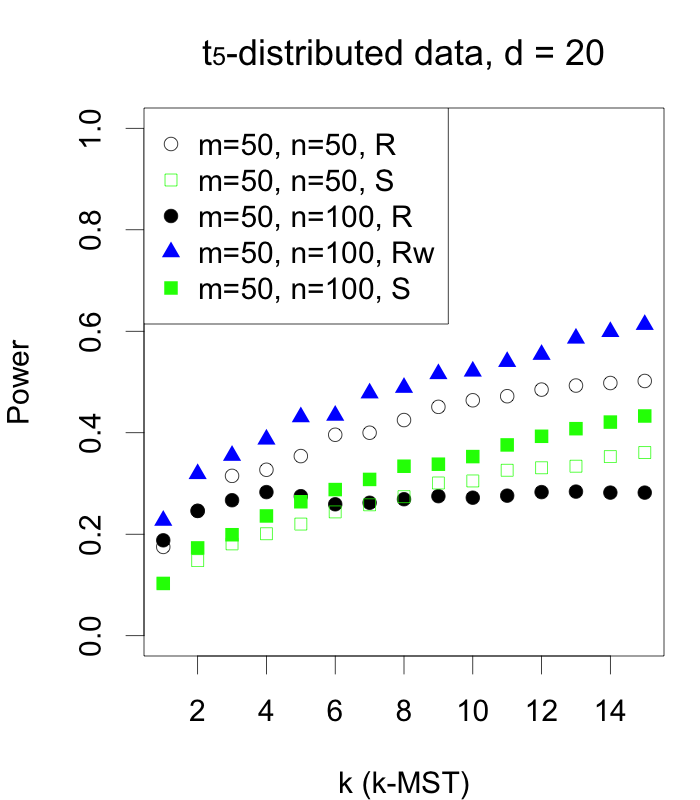}
\includegraphics[width=.32\textwidth]{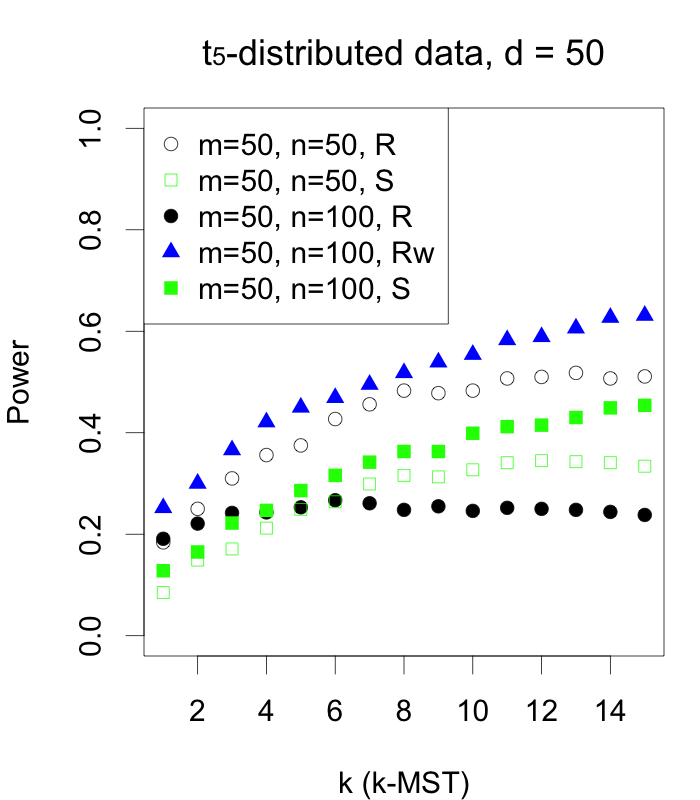}
\includegraphics[width=.32\textwidth]{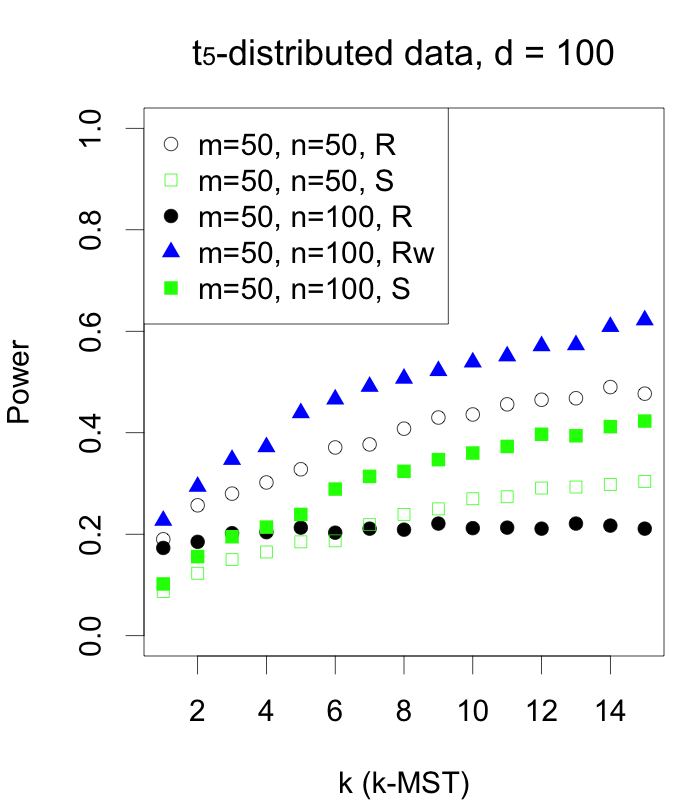}
\end{center}
\caption{The fraction of trials (out of 1,000) that the test rejected the null hypothesis at 0.05 significance level for $t_5$-distributed data.}\label{fig:powert5}
\end{figure}

\subsection{A comparison between the two weighted edge-count test statistics}
\label{sec:powerRw}

According to the definitions of $R_w$ and $\tilde{R}_w$, the tests based on $R_w$ and $\tilde{R}_w$ are very similar for large $m$ and $n$.  We here check the power of them under small $m$ and $n$.  We consider the testing of two samples with one sample $m=10$ observations from $\mathcal{N}(\mathbf{0},I_d)$ and the other sample $n=20$ observations from $\mathcal{N}(\bmu, I_d)$, $\|\bmu\|_2=2, \ d=50$.  The $p$-values are calculated based on 1,000 permutations and the fraction of trials (out of 100) that the test rejected the null hypothesis at 0.05 significance level is listed in Table \ref{tab:powerRw}.

\begin{table}[!htp]
\caption{The fraction of trials (out of 100) that the test rejected the null hypothesis of equal distribution at 0.05 significance level.}\label{tab:powerRw} 
\begin{center}
\begin{tabular}{c|ccccccccc}
\hline
& 1-MST & 2-MST & 3-MST & 4-MST & 5-MST & 6-MST & 7-MST & 8-MST & 9-MST \\ \hline
$R_w$ & 0.42 & 0.46 & 0.46 & 0.52 & 0.57 & 0.57 & 0.55 & 0.50 & 0.52 \\ \hline
$\tilde{R}_w$ & 0.36 & 0.41 & 0.41 & 0.50 & 0.54 & 0.54 & 0.53 & 0.48 & 0.48 \\ \hline
\end{tabular}
\end{center}
\end{table}

We see that the test based on $R_w$ has slightly higher power than that based on $\tilde{R}_w$ for all $k$-MSTs.  To check in a more detailed level, we calculate the difference in $p$-values (the $p$-value of the test based on $R_w$ minus the $p$-value of the test based on $\tilde{R}_w$) for each trial and the boxplots of the differences for each $k$-MST, $k=1,\dots,9$, are shown in Figure \ref{fig:pvaluediff}.  It is clear that the test based on $R_w$ in general has a smaller $p$-value than the test based on $\tilde{R}_w$.  

\begin{figure}[!htp]
\includegraphics[width=\textwidth]{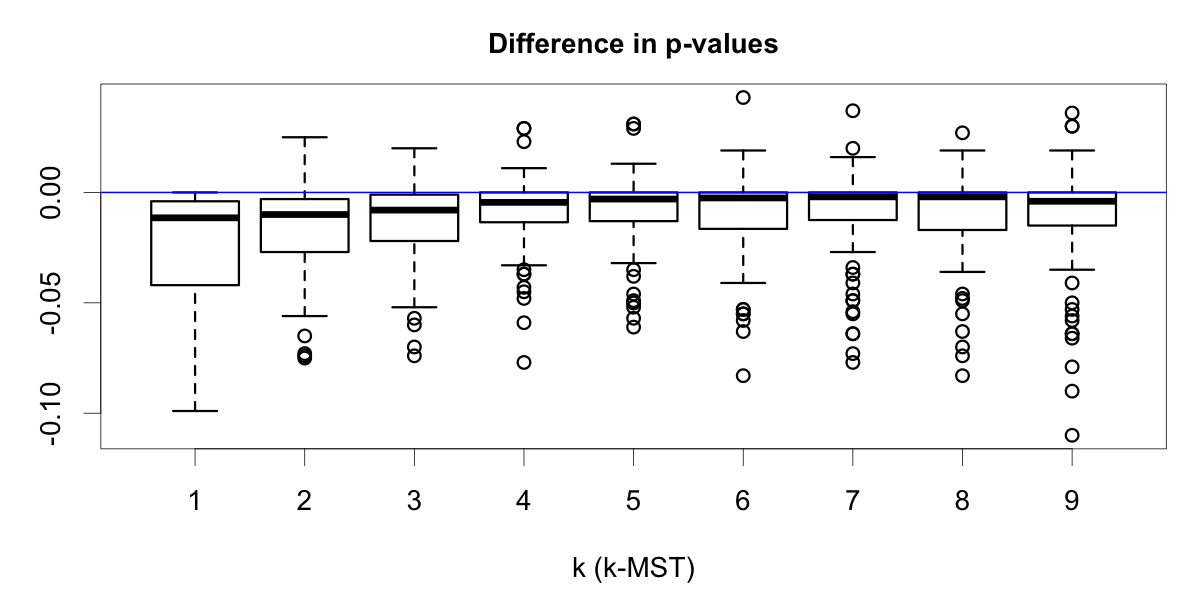}
\caption{Boxplots of the difference between the $p$-values of the tests based on $R_w$ and $\tilde{R}_w$ (the $p$-value of the test based on $R_w$ minus the $p$-value of the test based on $\tilde{R}_w$) for $k$-MST, $k=1,\dots,9$. The horizontal line is at level 0.}\label{fig:pvaluediff}
\end{figure}

According to the above comparison and the simpler form of $R_w$, we recommend to use $R_w$ in practice.

\section{Asymptotics}
\label{sec:asym}

When the sample size is small, we can obtain the $p$-value of the test directly through permutations.  When the sample size is large, this can be very time consuming.  In the following, we study the asymptotic distribution of $R_w$ under the \emph{usual limiting regime}, which is defined as $|G|,m,n\rightarrow\infty, m/N\rightarrow \lambda\in(0,1)$.  We show that, in the usual limiting regime, $R_w$, normalized by its mean and standard deviation, approaches the standard normal distribution as $N\rightarrow\infty$ under some mild conditions on the similarity graph $G$.  
We then check how well the asymptotic null distribution works in approximating $p$-values for finite samples.

\subsection{Asymptotic null distribution}

Before stating the theorem, we define two additional terms on the similarity graph $G$.
 \begin{align}
  A_e & = \{e\} \cup \{e^\prime\in G: e^\prime \text{ and } e \text{ share a node}\},\nonumber \\
  B_e & = A_e \cup \{e^{\prime\prime}\in G: \exists \  e^\prime\in A_e, \text{ such that } e^{\prime\prime} \text{ and } e^\prime \text{ share a node}\}.\nonumber
\end{align}
So $A_e$ is the subgraph in $G$ that consists of all edge(s) that connect to edge $e$, and $B_e$ is the subgraph in $G$ that consists of all edge(s) that connect to any edge in $A_e$.

\begin{theorem}\label{thm:asym}
  If $|G|=O(N^\alpha),\ 1\leq \alpha< 1.5$, $\sum_{e\in G} |A_e||B_e| = o(N^{1.5\alpha})$, and $\sum_{e\in G} |A_e|^2 = o(N^{\alpha+0.5})$, in the usual limiting regime, under the permutation null, 
  \begin{equation}
    \label{eq:asym}
    Z_w := \frac{R_w-\bE(R_w)}{\sqrt{\bV(R_w)}} \overset{\mathcal{D}}{\rightarrow} \mathcal{N}(0,1).
  \end{equation}
\end{theorem}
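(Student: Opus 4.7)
The plan is to prove Theorem~\ref{thm:asym} by Stein's method for normal approximation of sums of locally dependent random variables, adapted to the permutation null. Write $R_w = \sum_{e \in G} \xi_e$ where for each edge $e = (u,v)$,
$$\xi_e = q\,\mathbf{1}(L_u = L_v = 1) + p\,\mathbf{1}(L_u = L_v = 2),$$
and $L_i \in \{1,2\}$ denotes the sample label of node $i$. Under the permutation null, $(L_1,\ldots,L_N)$ is uniform over labelings with exactly $m$ ones, each $\xi_e$ is bounded, and $\xi_e$ depends only on the two endpoint labels of $e$. The key structural observation is that the natural first-order dependency neighborhood of $\xi_e$ is exactly $A_e$, and the second-order neighborhood is $B_e$, so the graph-theoretic quantities in the hypotheses are precisely the right combinatorial handles.

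The next step is to invoke a Stein-method bound tailored to exchangeable statistics: construct an exchangeable pair $(R_w, R_w')$ by swapping a uniformly chosen pair of labels, after which the distance of $Z_w$ to $\mathcal{N}(0,1)$ is controlled by (i) the variance of the conditional linear-drift $\bE(R_w - R_w' \mid L_1,\ldots,L_N)$, and (ii) a third-moment-type remainder. Expanding over edges and grouping by shared vertices, (i) reduces, after combinatorial bookkeeping, to a quantity bounded up to constants by $\sum_{e} |A_e|^2$, while (ii) reduces to one bounded by $\sum_{e} |A_e||B_e|$. A preliminary variance lower bound is needed: the closed form in Theorem~\ref{thm:VRw} shows $\bV(R_w) \asymp |G| = \Theta(N^\alpha)$ in the usual limiting regime, since the prefactor $mn(m-1)(n-1)/[N(N-1)(N-2)(N-3)]$ is of constant order and the subtracted terms are of strictly lower order. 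Consequently $\bV(R_w)^{3/2} = \Theta(N^{1.5\alpha})$ and $\bV(R_w)\cdot N^{1/2} = \Theta(N^{\alpha+0.5})$, so the hypotheses $\sum_e |A_e||B_e| = o(N^{1.5\alpha})$ and $\sum_e |A_e|^2 = o(N^{\alpha+0.5})$ translate directly into $o(1)$ Stein-bound terms, yielding the conclusion $Z_w \overset{\mathcal{D}}{\to} \mathcal{N}(0,1)$.

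The main obstacle is handling the residual global dependence imposed by the permutation constraint. Even edges sharing no vertex carry a small negative covariance through the fixed-sample-size constraint, so one cannot simply treat the $L_i$'s as i.i.d.\ Bernoulli and invoke an off-the-shelf local-dependence CLT; the bookkeeping inside the Stein bound must explicitly track these non-local contributions and verify that they are absorbed into the $o(\cdot)$ terms supplied by the graph-size assumptions. A secondary but delicate step is verifying the variance lower bound in the unbalanced regime $\lambda \neq 1/2$: one must check that the degree-based quantity $\sum_i |G_i|^2 - 4|G|^2/N$ cannot cause the bracketed expression inside $\bV(R_w)$ to collapse below order $|G|$, which is essentially guaranteed by the same non-negativity argument already used in the proof of Theorem~\ref{thm:VRw}. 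Once these two ingredients are in place, the theorem follows from a standard Stein calculation.
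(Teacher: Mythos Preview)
Your plan takes a genuinely different route from the paper. The paper does not attack the permutation null directly via exchangeable pairs; instead it introduces an auxiliary \emph{bootstrap null} in which each label $L_i$ is independently Bernoulli$(m/N)$, so that the edge summands truly satisfy the local-dependence structure $A_e \subset B_e$ with no residual global coupling. It then applies the local-dependence Stein bound of \cite{chen2005stein} (Theorem~3.4 there, not the exchangeable-pair version) jointly to $(Z_w^B, Z_X^B)$, where $Z_X^B$ is the standardized count of sample-1 labels, and obtains joint bivariate asymptotic normality via Cram\'er--Wold. Conditioning on $Z_X^B=0$ recovers the permutation null, and because the joint limit is Gaussian the conditional limit is Gaussian as well; a short calculation ($\sigma_P/\sigma_B\to 1$, $(\mu_B-\mu_P)/\sigma_P\to 0$) transfers the result to $Z_w$.

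The point of the paper's detour is that it completely dissolves what you correctly identify as the main obstacle: under the bootstrap null there is no permutation constraint, so the off-the-shelf local-dependence theorem applies cleanly, and the return to the permutation null is a one-line conditioning step. In your exchangeable-pair scheme that obstacle remains live, and ``track the non-local contributions and verify they are absorbed'' is exactly the hard part you have not sketched; it is not obvious that the swap-based Stein remainders reduce to $\sum_e |A_e|^2$ and $\sum_e |A_e||B_e|$ rather than, say, $\sum_i |G_i|^3$. One smaller correction: the variance lower bound $\bV(R_w)\asymp |G|$ does \emph{not} follow from the non-negativity argument in Theorem~\ref{thm:VRw} (that yields only the upper bound); what is needed is $\sum_i |G_i|^2 = o(N|G|)$, which the paper derives from the hypothesis $\sum_e |A_e|^2 = o(N^{\alpha+0.5})$ via Cauchy--Schwarz.
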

The proof for this theorem utilizes Stein's method \citep{chen2005stein}.  The complete proof is in Appendix \ref{sec:asymproof}.

\begin{remark}
This theorem also holds for $|G|=O(N^\alpha), \ 0.5<\alpha<1$, along with some other conditions on the graph.  However, such similarity graphs do not connect most of the nodes and are thus not of interest in practice as they missed most of the similarity information among the observations.
\end{remark}

\begin{remark}
The conditions $\sum_{e\in G} |A_e||B_e| = o(N^{1.5\alpha})$ and $\sum_{e\in G} |A_e|^2 = o(N^{\alpha+0.5})$ ensure that the graph does not have a huge hub or a cluster of small hubs, where a hub is a node with a large degree.  If we only concern graphs with $|G|=O(N)$, i.e., $\alpha=1$, then these two conditions degenerate into one condition: $\sum_{e\in G}|A_e||B_e|= o(N^{1.5})$.  Hence, the conditions in Theorem \ref{thm:asym} are much more relaxed than the conditions in obtaining the limiting distribution for the generalized edge-count test statistic in \cite{chen2016new}, which include not only $|G|=O(N)$ and $\sum_{e\in G}|A_e||B_e|= o(N^{1.5})$, but also $\sum_{i=1}^N|G_i|^2=O(N)$ and $\sum_{i=1}^N |G_i|^2 - 4 |G|^2/N = O(N)$. 
\end{remark}


\begin{corollary}
When the graph is a $k$-MST, where $k=O(1)$, based on the Euclidean distance, $Z_w\overset{\mathcal{D}}{\rightarrow} \mathcal{N}(0,1)$ in the usual limiting regime under the null hypothesis.
\end{corollary}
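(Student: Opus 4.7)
The plan is to verify that a $k$-MST with $k=O(1)$ built on Euclidean distance satisfies the three hypotheses of Theorem \ref{thm:asym} with $\alpha=1$.

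First I would record that a $k$-MST on $N$ points has exactly $k(N-1)$ edges, so $|G|=O(N)$ and $\alpha=1$ is forced. With $\alpha=1$ the remaining two hypotheses become $\sum_{e\in G}|A_e||B_e|=o(N^{1.5})$ and $\sum_{e\in G}|A_e|^2=o(N^{1.5})$, and both are immediate once we have a uniform $O(1)$ bound on the maximum degree $\Delta(G)$: if $e=(u,v)$ then $|A_e|\leq 2\Delta(G)-1$ and $|B_e|\leq 2\Delta(G)^2$, so each sum is at most $|G|\cdot O(1) = O(N)$, comfortably inside $o(N^{1.5})$.

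Everything therefore reduces to the degree bound $\Delta(G)=O(1)$ for a $k$-MST on Euclidean distance in a fixed dimension $d$. For a single ($k=1$) MST this is the classical geometric fact that any two MST edges meeting at a common vertex enclose an angle of at least $60^\circ$: if the enclosed angle were smaller, the triangle-closing edge would be strictly shorter than the longer of the two, contradicting MST optimality. This caps the MST degree at the kissing number $\kappa_d$. For a $k$-MST I would iterate this argument layer by layer: the $i$th MST is itself an MST inside the subgraph of edges that remain available (those not used in layers $1,\dots,i-1$), so the $60^\circ$ argument applied \emph{within} each layer still gives at most $\kappa_d$ edges per vertex per layer, for a total of $\Delta(G)\leq k\kappa_d=O(1)$.

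The main obstacle I foresee is this layerwise exchange argument. The naive $60^\circ$ proof for a single MST replaces a long edge by a shorter triangle-closing edge, but in the $k$-MST setting the replacement edge must also be absent from the earlier layers, so some care is needed to carry the argument through cleanly. I would either restrict attention to edges internal to a single layer (which suffices because the layers are edge-disjoint and each contributes at most $\kappa_d$ to the degree) or appeal to the known combinatorial degree bounds on $k$-MSTs in fixed-dimensional Euclidean space. Once $\Delta(G)=O(1)$ is established, the corollary follows at once from Theorem \ref{thm:asym}.
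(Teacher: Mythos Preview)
Your approach is correct in spirit but takes a different route from the paper. The paper dispatches the corollary in one sentence by observing that a $k$-MST with $k=O(1)$ on Euclidean distance already satisfies the (strictly stronger) graph conditions required for the asymptotic normality of the generalized edge-count statistic in \cite{chen2016new}; since those conditions imply the hypotheses of Theorem~\ref{thm:asym} (as noted in the remark following that theorem), the result is immediate. You instead verify the hypotheses of Theorem~\ref{thm:asym} directly via a bounded-degree argument. This is more self-contained and makes transparent \emph{why} the conditions hold---bounded degree is exactly what drives the cited result anyway---at the cost of having to supply the geometric argument yourself.

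On that point, you correctly identify the one real wrinkle: the $60^\circ$ exchange argument does not carry over verbatim to layers $i>1$, because the shorter triangle-closing edge may already lie in an earlier layer and so be unavailable for the swap. Your fallback---citing known degree bounds for Euclidean $k$-MSTs in fixed dimension---is the right resolution, and is in effect what the paper does by deferring to \cite{chen2016new}. Two small points worth making explicit in your write-up: first, your argument tacitly assumes the ambient dimension $d$ is fixed (so that $\kappa_d$ is a genuine constant), which is the standard regime here but should be stated; second, with $\alpha=1$ the condition $\sum_e|A_e|^2=o(N^{1.5})$ is already implied by $\sum_e|A_e||B_e|=o(N^{1.5})$ since $|A_e|\leq|B_e|$, so only the latter needs checking.
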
  

A $k$-MST, where $k=O(1)$, constructed on the Euclidean distance satisfies all conditions in obtaining the limiting distribution for the generalized edge-count test statistic \citep{chen2016new}.  Hence, it satisfies all conditions for Theorem \ref{thm:asym}, and the result follows.



\subsection{Consistency}

\begin{theorem}\label{thm:consistency}
For two continuous multivariate distributions, if the graph is a $k$-MST, $k=O(1)$, based on the Euclidean distance, the weighted edge-count test is consistent against all alternatives in the usual limiting regime.
\end{theorem}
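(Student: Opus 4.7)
The plan is to establish consistency by showing that under any alternative $F_1\neq F_2$, the observed statistic $R_w$ is separated from its permutation null mean by an order of magnitude larger than its permutation null standard deviation, so that $Z_w = (R_w-\bE(R_w))/\sqrt{\bV(R_w)}\to+\infty$ in probability; combined with Theorem \ref{thm:asym}, which pins the level-$\alpha$ critical value at $z_\alpha + o(1)$ under the null, this immediately yields $\bP(\text{reject})\to 1$.

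The first step is to control the permutation moments of $R_w$. A direct calculation using $\bE(R_i)=|G|\,n_i(n_i-1)/[N(N-1)]$ for the within-sample counts gives
\begin{equation*}
  \bE(R_w) = q\,\bE(R_1)+p\,\bE(R_2) = \frac{mn(N-2)}{N^2(N-1)}\,|G| = pq\,|G|\,(1+o(1)),
\end{equation*}
while Theorem \ref{thm:VRw} bounds $\bV(R_w)$ by $O(|G|)$. Since a $k$-MST with $k=O(1)$ has $|G|=k(N-1)=\Theta(N)$, this gives $\sqrt{\bV(R_w)}=O(\sqrt{N})$, uniformly in how unbalanced the split $(m,n)$ is.

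The second step is a law of large numbers for the within-sample edge counts on a Euclidean $k$-MST under the alternative. Using the stabilization framework of \cite{friedman1979multivariate} and \cite{maa1996reducing}, under continuous densities $f_1,f_2$ with $m/N\to p$,
\begin{equation*}
  \frac{R_1}{|G|} \xrightarrow{\bP} p^2 \int \frac{f_1^2}{pf_1+qf_2}\,dx, \qquad \frac{R_2}{|G|} \xrightarrow{\bP} q^2 \int \frac{f_2^2}{pf_1+qf_2}\,dx.
\end{equation*}
Substituting into $R_w=qR_1+pR_2$ and using $\int(pf_1+qf_2)\,dx=1$, a short algebraic manipulation (expanding $pf_1^2+qf_2^2-(pf_1+qf_2)^2 = pq(f_1-f_2)^2$) yields
\begin{equation*}
  \frac{R_w}{|G|} - pq \xrightarrow{\bP} (pq)^2 \int \frac{(f_1-f_2)^2}{pf_1+qf_2}\,dx,
\end{equation*}
which is the Henze--Penrose-type divergence and is strictly positive whenever $F_1\neq F_2$.

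Combining the two steps, $R_w - \bE(R_w)$ grows like a positive constant times $|G|=\Theta(N)$, while $\sqrt{\bV(R_w)}=O(\sqrt{N})$, so $Z_w\to +\infty$ in probability under the alternative, finishing the argument. The main obstacle is the limit in the second step: the consistency of the raw count $R/|G|$ for $k=1$ is exactly \cite{maa1996reducing}, but extending the Friedman--Rafsky stabilization argument to general fixed $k$ and, more importantly, tracking the \emph{labeled} subcounts $R_1,R_2$ separately and identifying their limits as the stated conditional integral functionals of $(f_1,f_2)$, is the non-routine ingredient. Once that law of large numbers is in hand, the remaining comparison of the $\Theta(N)$ signal to the $O(\sqrt{N})$ null standard deviation is immediate.
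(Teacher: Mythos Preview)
Your proposal is correct and follows essentially the same route as the paper: obtain the law of large numbers for the labeled within-sample counts $R_1/|G|$ and $R_2/|G|$ on a Euclidean $k$-MST, combine them to identify the limit of $(R_w-\bE(R_w))/|G|$ as the Henze--Penrose divergence $(pq)^2\int (f_1-f_2)^2/(pf_1+qf_2)\,dx$, and then contrast this $\Theta(N)$ signal with the $O(\sqrt{N})$ null standard deviation from Theorem~\ref{thm:VRw}. The only substantive remark is on attribution: the labeled-count limits you need are not in \cite{friedman1979multivariate} or \cite{maa1996reducing} but are the content of \cite{henze1999multivariate}, which is what the paper invokes (and which in fact gives almost sure convergence, slightly stronger than the convergence in probability you state); once that result is cited, the step you flag as ``non-routine'' is handled and the rest of your argument goes through verbatim.
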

This theorem can be proved through arguments extended from \cite{henze1999multivariate}.  The details are in Appendix \ref{sec:consistencyproof}.


\subsection{Accuracy of the $p$-value approximation from the asymptotics for finite sample sizes}
\label{sec:accuracy}

We here check how large the sample sizes need to be so that the asymptotic $p$-value based on Theorem \ref{thm:asym} is a good approximation to the permutation $p$-value.  Figure \ref{fig:pcheck} shows boxplots of the differences of the two $p$-values (approximated $p$-value from asymptotic results minus $p$-value calculated through 1,000 permutations) from 100 simulation runs under different choices of $m$, $n$, $d$, and the graph.  We see from the boxplots that the approximate $p$-value is very accurate for sample sizes in hundreds.  Making the graph slightly denser, or making the ratio of the two sample sizes higher does not affect the accuracy much. Increasing the dimension does not affect the accuracy much either.

\begin{figure}[!htp]
$d=20$:

\vspace{0.8em}

\includegraphics[width=\textwidth]{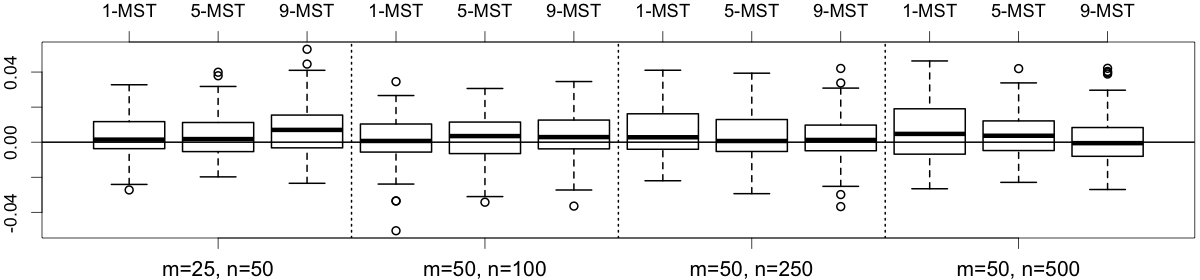} 

\vspace{0.8em}

$d=100$:

\vspace{0.8em}

\includegraphics[width=\textwidth]{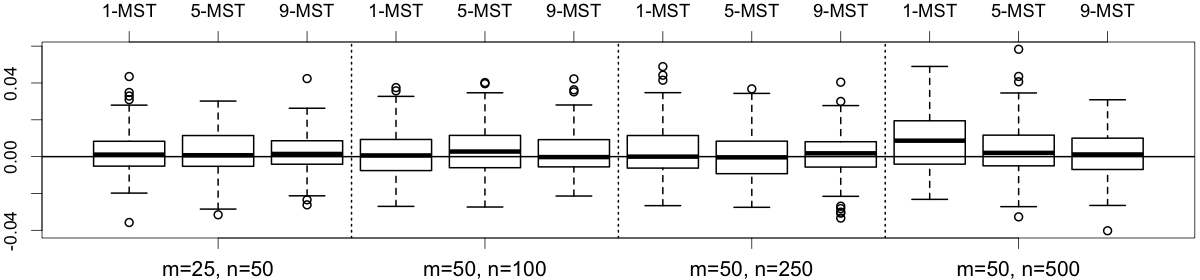}
\caption{Boxplots of the differences between the $p$-value based on asymptotic distribution and the $p$-value calculated directly from 1,000 permutations (100 simulation runs for each setting. $F_1=F_2=\mathcal{N}(\mathbf{0},I_d)$)).} \label{fig:pcheck}
\end{figure}

\section{A real data example}
\label{sec:application}

The MIT Media Laboratory conducted a study following 106 subjects, students and staffs in an institute, who used mobile phones with pre-installed software that can record call logs.  The study lasted from July 2004 to June 2005 \citep{eagle2009inferring}.  Given the richness of this dataset, many problems can be studied.  One question of interest is whether phone call patterns on weekdays are different from those on weekends.  The phone calls on weekdays and weekends can be viewed as representations of professional relationship and personal relationship, respectively.

We bin the phone calls by day and, for each day, construct a directed phone-call network with the 106 subjects as nodes and a directed edge pointing from person $i$ to person $j$ if person $i$ made at least one call to person $j$ on that day.  We encode the directed network of each day by an adjacency matrix, with 1 for element $[i,j]$ if there is a directed edge pointing from subject $i$ to subject $j$, and 0 if otherwise.  

In this period, there was no call among the subjects on $9.6\%$ of the days in weekdays and $9.3\%$ of the days in weekends.  We remove these days and end up with 214 days in weekdays and 85 days in weekends.  Let $A_1,\dots,A_{214}$ be the adjacency matrices on the 214 weekdays and $A_{215},\dots,A_{299}$ be that on the 85 days in weekends.  We consider two distance measures defined as:
\begin{enumerate}[ (1)]
\item the number of different entries: $d(A_i,A_j)=\|A_i-A_j\|_F^2$, where $\|\cdot\|_F$ is the Frobenius norm of a matrix,
\item the number of different entries, normalized by the geometric mean of the total edges in each of the two days: $d(A_i,A_j)=\frac{\|A_i-A_j\|_F^2}{\|A_i\|_F \|A_j\|_F}$.
\end{enumerate}

\begin{figure}[!htp]
\begin{center}
\includegraphics[width=.47\textwidth]{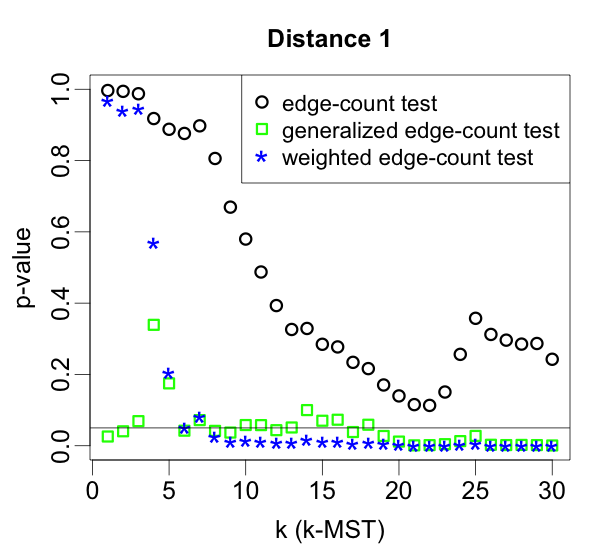} \ \
\includegraphics[width=.47\textwidth]{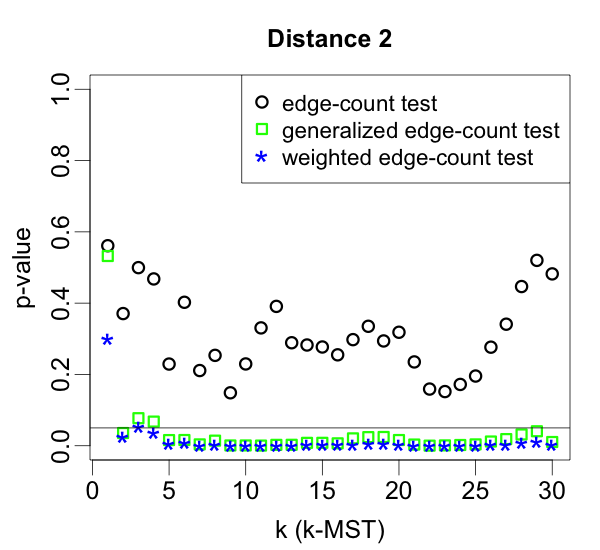}
\caption{The $p$-values of the edge-count tests (circles), the generalized edge-count test (squares), and the weighted edge-count test (stars) on $k$-MSTs constructed on distance 1 and distance 2, respectively.  The horizontal line is of level 0.05.}\label{fig:network}
\end{center}
\end{figure}

Figure \ref{fig:network} shows the $p$-values of the weighted edge-count test, the edge-count test, and the generalized edge-count test, on $k$-MSTs constructed on each of the two distances.
We see that, except for the small $k$'s, the weighted edge-count test rejects the null hypothesis of equal distribution at 0.05 significance level under both distances, while the edge-count test rejects none of them.  The generalized edge-count test rejects most of the scenarios.  Under distance 1, the $p$-values of the generalized edge-count test scatter around 0.05 for $k=8,\dots,20$.

Since all tests have higher power when the graph is slightly denser, the weighted edge-count test rejects the null hypothesis of equal distribution under both distances, the edge-count test does not reject the null under either distance, and the generalized edge-count test more or less rejects the null under both distances.

In the following, we study several representative cases in more details: 9-MST and 15-MST under distance 1 and 9-MST under distance 2.  Table \ref{tab:network} lists the summary statistics of the tests, in particular, the values of $R$, $R_1$, $R_2$, $(R_1+R_2)/2$ and $R_w$, as well as their expectations (mean), standard deviations (SD), and $z$-scores ((value-mean)/SD).  The test based on $(R_1+R_2)/2 = (|G|-R)/2$ is equivalent to that based on $R$.  We include $(R_1+R_2)/2$ in the table to make the comparison between the (unweighted) edge-count test and the weighted edge-count test easier.

\begin{table}[!htp]
\caption{Summary statistics for 9-MST and 15-MST under distance 1 and 9-MST under distance 2.  Sample 1: phone-call networks on weekdays; sample 2: phone-call networks on weekends.}\label{tab:network}
\begin{center}

9-MST, distance 1 \\

\vspace{0.5em}

\begin{tabular}{r||rrrrr}
\hline \hline
& Value & Mean & Value $-$ Mean & SD & $z$-score \\ \hline \hline
$R$ & 1124 & 1095.05 & 28.95 & 66.09 & 0.438 \\ \hline
$R_1$ & 1274 & 1372.03 & -98.03 & 104.95 & -0.934 \\ \hline
$R_2$ & 284 & 214.92 & 69.08 & 42.30 & 1.633 \\ \hline
$\frac{R_1 + R_2}{2}$ & 779 & 793.47 & -14.47 & 33.04 & -0.438 \\ \hline
$R_w$ & 565.44 & 543.86 & 21.58 & 9.50 & 2.272 \\ \hline \hline
\end{tabular}

\vspace{1.5em}

15-MST, distance 1 \\

\vspace{0.5em}

\begin{tabular}{r||rrrrr}
\hline \hline
& Value & Mean & Value $-$ Mean & SD & $z$-score \\ \hline \hline
$R$ & 1770 & 1825.08 & -55.08 & 96.77 & -0.569 \\ \hline
$R_1$ & 2316 & 2286.72 & 29.28 & 155.54 & 0.188 \\ \hline
$R_2$ & 384 & 358.19 & 25.81 & 62.26 & 0.414 \\ \hline
$\frac{R_1 + R_2}{2}$ & 1350 & 1322.46 & 27.54 & 48.38 & 0.569 \\ \hline
$R_w$ & 933.23 & 906.44 & 26.79 & 11.62 & 2.305 \\ \hline \hline
\end{tabular}

\vspace{1.5em}

9-MST, distance 2 \\

\vspace{0.5em}

\begin{tabular}{r||rrrrr}
\hline \hline
& Value & Mean & Value $-$ Mean & SD & $z$-score \\ \hline \hline
$R$ & 1055 & 1095.05 & -40.05 & 38.41 & -1.043 \\ \hline
$R_1$ & 1354 & 1372.03 & -18.03 & 54.99 & -0.327 \\ \hline
$R_2$ & 273 & 214.92 & 58.08 & 23.57 & 2.465 \\ \hline
$\frac{R_1 + R_2}{2}$ & 813.5 & 793.47 & 20.03 & 19.21 & 1.043 \\ \hline
$R_w$ & 580.31 & 543.86 & 36.44 & 10.04 & 3.629 \\ \hline \hline
\end{tabular}

\end{center}
\end{table} 

First, we take a close look at results on 9-MST based on distance 1.   There are less-than-expectation within-sample edges for the weekday sample and more-than-expectation within-sample edges for the weekend sample.  The sample size for the weekday sample is about 2.5 times as large as the the sample size for the weekend sample.  If we simply add the within-sample edges from the two samples, the number of total within-sample edges is less than its null expectation, falsely indicating that the observations are less likely to form edges within the samples, which leads to the conclusion that the two samples are well mixed and the null hypotheses is not rejected.  This is what the (unweighted) edge-count test does.  

On the other hand, even though $\bE(R_1)-R_1=98.03$ is larger than $R_2-\bE(R_2)=69.08$, it is still more likely to observe values that are more extreme than $R_1$ than that for $R_2$ if we take into account the sample sizes.  The weighted test statistic correctly summarizes the signals provided by both samples and results in a larger than expectation weighted within-sample edges.  Together with the variance minimizing effect in the weighted edge-count test, the test statistic is significantly enough to reject the null hypothesis.

A similar argument holds for the results on 9-MST based on distance 2.  However, in this case,  $\bE(R_1)-R_1$ is smaller than $R_2-\bE(R_2)$, so the edge-count test also concludes that there are more-than-expectation within-sample edges, while the test is not done effectively that the null is not rejected.  

The case of 15-MST based on distance 1 is slightly different from the other two cases that the within-sample edges for both samples are larger than their null expectations.  However, the total difference by plain addition is not significant as it is only about half of its corresponding standard deviation.  The variance minimizing effect of the weighted edge-count test is well reflected here that $R_w$ is of a similar amount away from its null expectation compared to the unweighted version ($(R_1+R_2)/2)$), but the standard deviation of $R_w$ is much smaller than that for $(R_1+R_2)/2$, vastly improving the power to detect the signal.  In this case, since both samples are more likely to connect within themselves, the alternative falls in the area that the weighted edge-count test is the most effective.  We see that the generalized edge-count test is not powerful enough for this case, and only the weighted edge-count test rejects the null hypothesis at 0.05 significance level among the three tests.  


\section{Relation between the weighted edge-count test and the generalized edge-count test}
\label{sec:relation}
The test statistic of the generalized edge-count test is
\begin{align}\label{eq:S}
S = (R_1-\bE(R_1), R_2-\bE(R_2))\Sigma_R^{-1}\left(\begin{array}{c}R_1-\bE(R_1) \\ R_2-\bE(R_2) \end{array}  \right),
\end{align}
where $\Sigma_R$ is the covariance matrix of the vector $(R_1,R_2)^\prime$.

According to Remark 3.4 in \cite{chen2016new}, when $\sum_{i=1}^N|G_i|^2-\frac{4|G|^2}{N} = O(|G|)$, which is commonly achieved for $k$-MST, $k=O(1)$,
\begin{align}
\lim_{N\rightarrow\infty} \frac{\Sigma_R}{|G|} & = p^2 q^2 \left( \begin{array}{cc}1+rp/q & 1-r \\ 1-r & 1+ r q/p  \end{array} \right),
\end{align}
where $r=\lim_{N\rightarrow\infty}\sum_{i=1}^N(|G_i|^2 - 4|G|^2/N)/|G|$. 

Since
\begin{align*}
\left( \begin{array}{cc}1+rp/q & 1-r \\ 1-r & 1+ r q/p  \end{array} \right) = \left( \begin{array}{cc} p & 1 \\ -q & 1  \end{array} \right)\left( \begin{array}{cc} \frac{r}{pq} & 0 \\ 0 & 1  \end{array} \right) \left( \begin{array}{cc} p & -q \\ 1 & 1  \end{array} \right),
\end{align*}
we have
\begin{align*}
\left( \begin{array}{cc}1+rp/q & 1-r \\ 1-r & 1+ r q/p  \end{array} \right)^{-1} = \left( \begin{array}{cc} 1 & q \\ -1 & p  \end{array} \right)\left( \begin{array}{cc} \frac{pq}{r} & 0 \\ 0 & 1  \end{array} \right) \left( \begin{array}{cc} 1 & -1 \\ q & p  \end{array} \right),
\end{align*}
and 
\begin{align*}
\lim_{N\rightarrow\infty}S & = \frac{(qR_1+pR_2-(q\bE(R_1)+p\bE(R_2)))^2}{p^2q^2|G|} + \frac{(R_1-R_2-(\bE(R_1)-\bE(R_2)))^2}{rpq|G|}.
\end{align*}

We see that the first part of the summation corresponds to the weighted edge-count test.  As discussed in \cite{chen2016new}, under alternative hypothesis, it can be that (i) both samples are more likely to connect within themselves; or (ii) one sample is more likely to connect within themselves while the other sample is less likely to connect within themselves.  The two parts in the summation correspond to the two scenarios.  So the generalized edge-count test works under a wider range of alternatives than the weighted edge-count test does, while if the alternative is of scenario (i), the weighted edge-count test is slightly more powerful.  

Since no alternative would lead to the scenario that both samples are less likely to connect within themselves when testing two randomly drawn samples, to make the generalized edge-count test slightly more powerful for both scenarios, we can use the test statistic $\max(Z_w,|Z_\text{diff}|)$ with 
\begin{align*}
Z_w & = \frac{qR_1+pR_2-(q\bE(R_1)+p\bE(R_2))}{pq\sqrt{|G|}}, \\
Z_\text{diff} & = \frac{R_1-R_2-(\bE(R_1)-\bE(R_2))}{\sqrt{rpq|G|}}.
\end{align*}
It can be shown that $(Z_w,Z_\text{diff})^\prime$ is asymptotically bivariate Gaussian distributed and $Z_w$ and $Z_\text{diff}$ are asymptotically independent, so the asymptotic critical value for this variant generalized edge-count test can be easily determined.

\section{Conclusion}
\label{sec:conclusion}

We propose a new two-sample test that utilizes the similarity information among the observations.  In particular, the test is based on a similarity graph constructed on the pooled observations.  Thus, the test can be applied to multivariate data and non-Euclidean data as long as an informative similarity measure on the sample space can be defined.  The classic test of this type, the edge-count test, has an issue when the sample sizes are different.  The new test solves this problem by giving weights based on sample sizes to the different components of the edge-count test statistic.  This weighted edge-count test exhibits substantial power gains in simulation studies. 

The weighted edge-count test statistic, standardized by its mean and standard deviation, is shown to converge to the standard normal distribution under some mild conditions on the similarity graph.  The approximated $p$-value based on the asymptotic results is reasonably accurate to the permutation $p$-value for sample sizes in hundreds, facilitating the application of the test to large datasets.

The weighted edge-count test and the generalized edge-count test in \cite{chen2016new} can be used in a complementary way.  The generalized edge-count test works for a wider range of alternatives under practical sample sizes compared to the weighted edge-count test.  However, under locational alternatives, the power of the weighted edge-count test is higher than that for the generalized edge-count test.  The choice of whether to use the weighted edge-count test or the generalized edge-count test can be done in a similar manner to the choice between the Hotelling $T^2$ test and the generalized likelihood ratio test not assuming equal covariance matrix.

\section*{Acknowledgements}
\label{sec:acknowledgements}

Hao Chen is supported in part by NSF award DMS-1513653. 
 
\bibliographystyle{plainnat}
\bibliography{newtest}

\appendix

\section{Proof to Theorem \ref{thm:asym}}
\label{sec:asymproof}

The proof of Theorem \ref{thm:asym} relies on Stein's method.  Consider sums of the form $W=\sum_{i\in{\cal J}} \xi_i,$
where $\mathcal{J}$ is an index set and $\xi$ are random variables with $\bE \xi_i=0$, and $\bE (W^2)=1$.  The following assumption restricts the dependence between $\{\xi_i:~i \in \mathcal{J}\}$.
\begin{assumption} \cite[p.\, 17]{chen2005stein}
  \label{assump:LD}
For each $i\in{\cal J}$ there exists $S_i \subset T_i \subset {\cal J}$ such that $\xi_i$ is independent of $\xi_{S_i^c}$ and $\xi_{S_i}$ is independent of $\xi_{T_i^c}$.
\end{assumption}
We will use the following theorem in proving Theorem \ref{thm:asym}.
\begin{theorem}\label{thm:3.4} \cite[Theorem 3.4]{chen2005stein}
Under Assumption \ref{assump:LD}, we have
$$\sup_{h\in Lip(1)} |\bE h(W) - \bE h(Z)| \leq \delta,$$
where $Lip(1) = \{h: \mathbb{R}\rightarrow \mathbb{R} \}$, $Z$ has ${\cal N}(0,1)$ distribution and
 $$\delta = 2 \sum_{i\in{\cal J}} (\bE|\xi_i \eta_i\theta_i| + |\bE(\xi_i\eta_i)|\bE|\theta_i|) + \sum_{i\in{\cal J}} \bE|\xi_i\eta_i^2|$$
with $\eta_i = \sum_{j\in S_i}\xi_j$ and $\theta_i = \sum_{j\in T_i} \xi_j$, where $S_i$ and $T_i$ are defined in Assumption \ref{assump:LD}.
\end{theorem}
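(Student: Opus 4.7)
The plan is to execute the classical Stein's method recipe, using the two-level independence in Assumption \ref{assump:LD} to split the error into exactly the three pieces appearing in $\delta$. For $h\in Lip(1)$, let $f=f_h$ be the unique bounded solution of the Stein equation
\[
f'(w)-wf(w)=h(w)-\bE h(Z),\qquad w\in\mathbb{R},
\]
so that $\bE h(W)-\bE h(Z)=\bE[f'(W)-Wf(W)]$. The standard bounds for $f_h$ when $\|h\|_{\mathrm{Lip}}\leq 1$ give $\|f'\|_\infty\leq 1$ and, more importantly, $\|f''\|_\infty\leq 2$; only the bound on $f''$ is actually used below.

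The first decomposition exploits the inner layer of independence. Writing $Wf(W)=\sum_{i\in\mathcal{J}}\xi_i f(W)$, the hypothesis $\xi_i\perp\xi_{S_i^c}$ together with $\bE\xi_i=0$ gives $\bE[\xi_i f(W-\eta_i)]=0$, hence $\bE[\xi_i f(W)]=\bE[\xi_i\{f(W)-f(W-\eta_i)\}]$. A second-order Taylor expansion
\[
f(W)-f(W-\eta_i)=\eta_i f'(W)-\int_0^{\eta_i}(\eta_i-u)f''(W-u)\,du
\]
rewrites this as $\bE[\xi_i\eta_i f'(W)]+E_{2,i}$ with $|E_{2,i}|\leq \tfrac{1}{2}\|f''\|_\infty\bE|\xi_i\eta_i^2|\leq \bE|\xi_i\eta_i^2|$. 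Summing over $i$ yields the third term of $\delta$ with its coefficient of $1$.

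For the leading piece $\sum_i\bE[\xi_i\eta_i f'(W)]$, I would invoke the outer layer of independence: $\xi_i\eta_i$ is a function of $\xi_{S_i}$ while $W-\theta_i$ is a function of $\xi_{T_i^c}$, so $\bE[\xi_i\eta_i f'(W-\theta_i)]=\bE[\xi_i\eta_i]\,\bE[f'(W-\theta_i)]$. Splitting $f'(W)=[f'(W)-f'(W-\theta_i)]+f'(W-\theta_i)$ and using $|f'(W)-f'(W-\theta_i)|\leq 2|\theta_i|$ yields $\sum_i\bE[\xi_i\eta_i f'(W)]=\sum_i\bE[\xi_i\eta_i]\,\bE[f'(W-\theta_i)]+\sum_i E_{1,i}$ with $|E_{1,i}|\leq 2\bE|\xi_i\eta_i\theta_i|$, contributing the first term of $\delta$. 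The key remaining identity $\sum_i\bE[\xi_i\eta_i]=\bE W^2=1$ (which holds because $\bE\xi_i\xi_j=0$ whenever $j\notin S_i$) lets one write
\[
\bE f'(W)-\sum_i\bE[\xi_i\eta_i]\,\bE f'(W-\theta_i)=\sum_i\bE[\xi_i\eta_i]\,\bE[f'(W)-f'(W-\theta_i)],
\]
whose summands are bounded in absolute value by $2|\bE\xi_i\eta_i|\bE|\theta_i|$, producing the middle term of $\delta$. Collecting the three error contributions gives exactly the stated bound.

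The only real obstacle is careful bookkeeping: one must align signs and endpoints in the Taylor integral so that the factor $\tfrac{1}{2}$ from the integral remainder combines with $\|f''\|_\infty\leq 2$ to leave the coefficient $1$ (rather than $2$) on $\bE|\xi_i\eta_i^2|$, and one must justify $\|f''\|_\infty\leq 2$ for Lipschitz test functions from the explicit representation $f(w)=e^{w^2/2}\int_{-\infty}^w(h(t)-\bE h(Z))e^{-t^2/2}\,dt$. Both steps are standard but essential for matching the exact constants in $\delta$.
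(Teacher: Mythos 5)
This theorem is not proved in the paper at all; it is imported verbatim by citation from \cite[Theorem 3.4]{chen2005stein}, so there is no in-paper argument to compare against. Your proof is correct and is essentially the standard (and the cited source's own) Stein's-method argument for local dependence: the Stein equation, the bound $\|f_h''\|_\infty\leq 2$ for $h\in Lip(1)$, the inner independence $\xi_i\perp\xi_{S_i^c}$ to produce $\bE[\xi_i\{f(W)-f(W-\eta_i)\}]$ with the $\bE|\xi_i\eta_i^2|$ remainder, the outer independence $\xi_{S_i}\perp\xi_{T_i^c}$ plus the identity $\sum_i\bE[\xi_i\eta_i]=\bE W^2=1$ to produce the two terms with coefficient $2$, and the constants all match $\delta$ exactly. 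The only implicit point worth flagging is that you use $i\in S_i$ (so that $\xi_i\eta_i$ is $\xi_{S_i}$-measurable and hence independent of $W-\theta_i$), which is part of the standard formulation of Assumption \ref{assump:LD} even though the paper's statement of $Lip(1)$ and of the assumption is slightly abbreviated.
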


To prove Theorem \ref{thm:asym}, we take one step back to study the statistic under the bootstrap null distribution, which is defined as follows: For each observation, we assign it to be from sample 1 with probability $m/N$, and from sample 2 with probability $n/N$, independently of other observations.  We use $g_i$ to denote the sample assignment for observation $i$, with $g_i=1$ if observation $i$ is assigned to be from sample 1 and $g_i=2$ if otherwise.  Let $X=\sum_{i=1}^N I(g_i=1)$ be the number of observations assigned to be from sample 1, where $I(\cdot)$ is the indicator function.  Then, the bootstrap null distribution conditioning on $X=m$ is equivalent to the permutation null distribution.  We use $\bPB$, $\bEB$, $\bVB$, $\bCovB$ to denote the probability, expectation, variance, and covariance under the bootstrap null distribution, respectively.  (We here add the subscript $_{\mathbf{P}}$ to denote the corresponding quantities under the permutation null distribution.)

Let $p_N=m/N, q_N=1-p_N$, then $\lim_{N\rightarrow\infty} p_N = \lambda, \lim_{N\rightarrow\infty}q_N=1-\lambda$.  For any $e=(e_-,e_+)\in G$, let
\begin{align*}
\beta_e = \left\{ \begin{array}{ll} 1 & \text{ if } g_{e_-}=g_{e_+}=1, \\ 2 & \text{ if } g_{e_-}=g_{e_+}=2, \\ 0 & \text{ if } g_{e_-}\neq g_{e_+}.  \end{array}\right.
\end{align*}

Given that the $g_i$'s are independent under the bootstrap null distribution, we have
\begin{align*}
  \bEB(R_1) & = \frac{m^2}{N^2}|G|, \\
  \bEB(R_2) & = \frac{n^2}{N^2}|G|, \\
  \bVB(R_1) & = \frac{m^2 n^2}{N^4}|G| + \frac{m^3 n}{N^4}\sum_{i=1}^N|G_i|^2, \\
  \bVB(R_2) & = \frac{m^2 n^2}{N^4}|G| + \frac{m n^3}{N^4}\sum_{i=1}^N|G_i|^2, \\
  \bCovB(R_1,R_2) & = \frac{m^2 n^2}{N^4}|G| - \frac{m^2 n^2}{N^4}\sum_{i=1}^N|G_i|^2.
\end{align*}
For $R_w = q_N R_1 + p_N R_2$, we have
\begin{align*}
\bEB(R_w) & = \frac{mn}{N^2}|G| :=\mu_B, \\
\bVB(R_w) & = \frac{m^2 n^2}{N^4}|G| :=\sigma_B^2.
\end{align*}
In contrast, we have
\begin{align*}
\bE_\bP(R_w) & = \frac{mn(N-2)}{N^2(N-1)}|G| :=\mu_P, \\
\bV_\bP(R_w) & \frac{mn(m-1)(n-1)}{N(N-1)(N-2)(N-3)}\times \\ 
& \quad \left(|G|-\frac{mnN-2m^2-2n^2+2mn}{N^2(m-1)(n-1)} \left(\sum_{i=1}^N|G_i|^2-\frac{4|G|^2}{N} \right) - \frac{2}{N(N-1)}|G|^2 \right) \\
& :=\sigma_P^2,
\end{align*}

Let 
\begin{align*}
  Z_w^B & = \frac{R_w-\mu_B}{\sigma_B}, \quad 
  Z_X^B = \frac{X-m}{\sqrt{mn/N}}.
\end{align*}

Under the conditions of Theorem \ref{thm:asym}, as $N\rightarrow\infty$, we can prove the following results:
\begin{enumerate}[(i)]
\item $(Z_w^B, Z_X^B)$ becomes bivariate Gaussian distributed under the bootstrap null.
\item $$\frac{\sigma_P}{\sigma_B}\rightarrow 1, \quad \frac{\mu_B - \mu_P}{\sigma_P}\rightarrow 0.$$
\end{enumerate}

From (i) and together with $\bVB(Z_X^B)=1$, as $N\rightarrow \infty$, the conditional distribution of $Z_w^B$ given $Z_X^B$ converges to a Gaussian distribution under the bootstrap null distribution.  Since the bootstrap null distribution conditioning on $Z_X^B=0$ is equivalent to the permutation null distribution, $Z_w^B$ follows a Gaussian distribution under the permutation null distribution as $N\rightarrow\infty$.  Notice that 
$$ Z_w = \frac{\sigma_B}{\sigma_P}\left(Z_w^B + \frac{\mu_B - \mu_P}{\sigma_B}\right),$$
together with (ii), we have $Z_w$ converges to a Gaussian distribution under the permutation null distribution as $N\rightarrow\infty$.  

In the following, we prove results (i) and (ii).

To prove (i), by Cram$\acute{\text{e}}$r-Wold device, we only need to show that $W=a_1 Z_w^B + a_2 Z_X^B$ is asymptotically Gaussian distributed for any combination of $a_1$ and $a_2$ such that $\bVB(W)>0$.

Let 
\begin{align*}
  \xi_e & = \frac{a_1}{\sigma_B}\left(\frac{n}{N}I_{\beta_e=1} + \frac{m}{N}I_{\beta_e=2} -\frac{mn}{N^2} \right), \\
  \xi_i & = a_2 \frac{I_{g_i=1}-m/N}{\sigma_0}, \ \sigma_0=\sqrt{mn/N}.
\end{align*}
Then $W=\sum_{j\in \mathcal{J}} \xi_j$, where $\mathcal{J}=\{e\in G\} \cup \{1,\dots,N\}$.  Let $a=\max(|a_1|,|a_2|)$, then $|\xi_e|\leq a/\sigma_B$ and $|\xi_i|\leq a/\sigma_0$. 

For $e=(e_-,e_+) \in \mathcal{J}$, let 
\begin{align*}
  S_e & = A_e\cup \{e_-,e_+\}, \\
  T_e & = B_e\cup\{\text{nodes in $A_e$}\}.
\end{align*}
Then $S_e$ and $T_e$ satisfy Assumption \ref{assump:LD}.  

For $i\in \{1,\dots,N\}$, let 
\begin{align*}
  S_i & = \{e\in G_i\}\cup \{i\}, \\
  T_i & = \{e\in G_{i,2}\}\cup \{\text{nodes in $G_i$}\},
\end{align*}
where $G_{i,2} = \{(j,l)\in G: j\in G_i\}$ is the subgraph is $G$ that consists of all edges that connect to any node in $G_i$. Then $S_i$ and $T_i$ satisfy Assumption \ref{assump:LD}.

For $j\in \mathcal{J}$, let $\eta_j = \sum_{k\in K_j} \xi_k$, $\theta_j=\sum_{k\in L_j} \xi_k$.
By Theorem \ref{thm:3.4}, we have $\sup_{h\in Lip(1)} |\bEB h(W) - \bE h(Z)|\leq \delta$ for $Z\sim \mathcal{N}(0,1)$, where
\begin{align*}
  \delta & = \frac{1}{\sqrt{\bVB(W)}}\left(2\sum_{j\in\mathcal{J}}(\bEB|\xi_j\eta_j\theta_j| + |\bEB(\xi_j\eta_j)|\bEB|\theta_j|) + \sum_{j\in\mathcal{J}} \bEB|\xi_j\eta_j^2| \right) \\
  & \leq \frac{a^3}{\sqrt{\bVB(W)}} \left( 5 \sum_{e\in G} \frac{1}{\sigma_B}\left(\frac{|A_e|}{\sigma_B}+\frac{2}{\sigma_0}\right)\left(\frac{|B_e|}{\sigma_B}+\frac{2|A_e|}{\sigma_0}\right) \right.\\
  & \quad \quad \quad \quad \quad \quad \quad \quad \quad  +\left. 5 \sum_{i=1}^N \frac{1}{\sigma_0}\left(\frac{|G_i|}{\sigma_B}+\frac{1}{\sigma_0}\right)\left(\frac{|G_{i,2}|}{\sigma_B}+\frac{2|G_i|}{\sigma_0}\right)  \right).
\end{align*}
Since $\sigma_B=O(|G|^{0.5})$ and $\sigma_0=O(N^{0.5})$, then as long as 
\begin{align}
\sum_{e\in G}|A_e||B_e| & = o(|G|^{1.5}), \label{eq:e1} \\
\sum_{e\in G}|A_e|^2 & = o(|G|\cdot N^{0.5}), \label{eq:e2} \\
\sum_{e\in G}|B_e| & = o(|G|\cdot N^{0.5}), \label{eq:e3} \\
\sum_{e\in G}|A_e| & = o(|G|^{0.5}\cdot N), \label{eq:e4} \\
\sum_{i=1}^N |G_i||G_{i,2}| & = o(|G|\cdot N^{0.5}), \label{eq:e5} \\
\sum_{i=1}^N |G_i|^2 & = o(|G|^{0.5}\cdot N), \label{eq:e6} \\
\sum_{i=1}^N |G_{i,2}| & = o(|G|^{0.5}\cdot N), \label{eq:e7} \\
\sum_{i=1}^N |G_i| & = o(N^{1.5}), \label{eq:e8}
\end{align}
result (i) follows.  We next show that the conditions in Theorem \ref{thm:asym} ($|G|=O(N^\alpha), \ 1\leq \alpha<1.5$, $\sum_{e\in G}|A_e||B_e| = o(N^{1.5\alpha})$, $\sum_{e\in G}|A_e|^2 = o(N^{\alpha+0.5})$) are enough to show \eqref{eq:e1}-\eqref{eq:e8}.

First of all, if we substitute $|G|$ by $O(N^\alpha)$, \eqref{eq:e1}, \eqref{eq:e2}, and \eqref{eq:e8} follow immediately.  

Since $|B_e|\leq \sum_{e^\prime \in A_e} |A_{e^\prime}|$, we have $\sum_{e\in G} |B_e|\leq \sum_{e\in G}\sum_{e^\prime \in A_e} |A_{e^\prime}|$.  For the latter quantity, for any $e^*\in G$, $|A_{e^*}|$ appears $|A_{e^*}|$ times because $e^*$ is the first neighbor of $(|A_{e^*}|-1)$ edges.  Hence, the latter quantity equals $\sum_{e\in G} |A_e|^2$.  So \eqref{eq:e2} ensures \eqref{eq:e3}.

By Cauchy-Schwarz, we have $\sum_{e\in G}|A_e| \leq \sqrt{\sum_{e\in G} |A_e|^2 \sum_{e\in G} 1^2} = o(|G|\cdot N^{0.25})$.  Since $(\alpha+0.25)-(0.5\alpha+1) = 0.5\alpha-0.75<0$ as $\alpha<1.5$, \eqref{eq:e4} follows.   

We use $\mathcal{V}_{G_i}$ to denote the vertex set of $G_i$. Since $|G_{i,2}| \leq \sum_{j\in \mathcal{V}_{G_i}} |G_j|$, we have $\sum_{i=1}^N |G_{i,2}| = \sum_{i=1}^N \sum_{j\in \mathcal{V}_{G_i}} |G_j| = \sum_{(i,j)\in G} (|G_i|+|G_j|) \leq \sum_{e\in G} 2|A_e|$. So \eqref{eq:e4} ensures \eqref{eq:e7}.

Also, $\sum_{e\in G}|A_e| = \sum_{(i,j)\in G} (|G_i|+|G_j|-1) = \sum_{i=1}^N \sum_{j\in \mathcal{V}_{G_i}} |G_j| - |G| = \sum_{i=1}^N |G_i|^2 - |G|$ because for each $i^*$, $|G_{i^*}|$ appears $|G_{i^*}|$ times in the summation $\sum_{i=1}^N \sum_{j\in \mathcal{V}_{G_i}} |G_j|$.  Since $|G|=o(N^{0.5\alpha +1})$ when $|G|=O(N^\alpha),\ \alpha<1.5$, \eqref{eq:e4} and \eqref{eq:e6} are equivalent.

For \eqref{eq:e5}, we have $\sum_{i=1}^N |G_i||G_{i,2}| \leq \sum_{i=1}^N |G_i| \sum_{j \in \mathcal{V}_{G_i}} |G_j| = \sum_{i=1}^N \sum_{j\in \mathcal{V}_{G_i}} |G_i||G_j| = 2\sum_{(i,j)\in G} |G_i||G_j| \leq 2\sum_{e\in A_e} |A_e|^2$.  So \eqref{eq:e2} ensures \eqref{eq:e5}.

Hence, all \eqref{eq:e1}-\eqref{eq:e8} can be derived from conditions in Theorem \ref{thm:asym}.

Next we prove result (ii).  We have
\begin{align*}
\lim_{N\rightarrow\infty} \frac{\sigma^2_P}{\sigma^2_B} & = \lim_{N\rightarrow\infty} \frac{|G|-\frac{mnN-2m^2-2n^2+2mn}{N^2(m-1)(n-1)} \left(\sum_{i=1}^N|G_i|^2-\frac{4|G|^2}{N} \right) - \frac{2}{N(N-1)}|G|^2 }{|G|} \\
& = 1-\lim_{N\rightarrow\infty} \left(\frac{\sum_{i=1}^N|G_i|^2}{N|G|} - \frac{4|G|}{N^2} + \frac{2|G|}{N^2} \right).
\end{align*}
Because $|G|=O(N^\alpha),\ \alpha<1.5$ and $\sum_{i=1}^N |G_i|^2 = o(N^{0.5\alpha+1})$ from the proof for result (i), we have $\lim_{N\rightarrow\infty} \left(\frac{\sum_{i=1}^N|G_i|^2}{N|G|} - \frac{4|G|}{N^2} + \frac{2|G|}{N^2} \right)=0$, so $\lim_{N\rightarrow\infty}\frac{\sigma_P}{\sigma_B} = 1$.

Since $\mu_B-\mu_P = \frac{mn}{N^2(N-1)}|G|$, we have
\begin{align}
\lim_{N\rightarrow\infty}\frac{\mu_B-\mu_P}{\sigma_P} = \lim_{N\rightarrow\infty}\frac{\sqrt{|G|}}{N-2} = 0,
\end{align}
when $|G|=O(N^\alpha),\ \alpha<1.5$.

\section{Proof to Theorem \ref{thm:consistency}}
\label{sec:consistencyproof}
Let the density functions of the two multivariate distributions be $f$ and $g$.  When the similarity graph is a $k$-MST, $k=O(1)$, constructed on the Euclidean distance, following the approach in \cite{henze1999multivariate}, we have
$$\frac{R_1}{N} \rightarrow k \int \frac{\lambda^2f^2(x)}{\lambda f(x)+(1-\lambda)g(x)} dx \quad \text{almost surely, and}$$
$$\frac{R_2}{N}\rightarrow k \int \frac{(1-\lambda)^2g^2(x)}{\lambda f(x)+(1-\lambda)g(x)} dx \quad \text{almost surely.}$$
So
\begin{align*}
\frac{R_w-\bE(R_w)}{N} & = \frac{n}{N}\frac{R_1}{N} + \frac{m}{N}\frac{R_2}{N}-\frac{m(m-1)n/N+n(n-1)m/N}{N^2(N-1)}|G| \\
& \quad \rightarrow k\lambda(1-\lambda)\left(\int\frac{\lambda f^2(x) + (1-\lambda) g^2(x)}{\lambda f(x) + (1-\lambda) g(x)} dx -1 \right) \quad \text{almost surely}.
\end{align*}
Since
\begin{align*}
 \int\frac{\lambda f^2(x) + (1-\lambda) g^2(x)}{\lambda f(x) + (1-\lambda) g(x)} dx - 1 & = \int\frac{\lambda f(x)(f(x)-g(x))}{\lambda f(x)+(1-\lambda)g(x)} dx \\
 & = \int\frac{(1-\lambda)g(x)(g(x)-f(x))}{\lambda f(x)+(1-\lambda)g(x)} dx,
\end{align*}
we have
\begin{align*}
&\int\frac{\lambda f^2(x) + (1-\lambda) g^2(x)}{\lambda f(x) + (1-\lambda) g(x)} dx - 1 \\
& \quad = (1-\lambda)\int\frac{\lambda f(x)(f(x)-g(x))}{\lambda f(x)+(1-\lambda)g(x)} dx + \lambda\int\frac{(1-\lambda)g(x)(g(x)-f(x))}{\lambda f(x)+(1-\lambda)g(x)} dx \\
& \quad = \lambda(1-\lambda)\int\frac{(f(x)-g(x))^2}{\lambda f(x)+(1-\lambda)g(x)}dx .
\end{align*}
Therefore, $\int\frac{\lambda f^2(x) + (1-\lambda) g^2(x)}{\lambda f(x) + (1-\lambda) g(x)} dx \geq 1$ and it is strictly greater than 1 if $f$ and $g$ differ on a set of positive measure.  Since $\sqrt{\bV(R_w)}=O(\sqrt{N})$ according to Theorem \ref{thm:VRw}, the test is $N^\gamma$-consistent for any $\gamma>0.5$.

\end{document}